\documentclass[aps,pra,reprint,10pt,superscriptaddress,nofootinbib]{revtex4-2}
\usepackage{amsmath,amssymb,amsfonts}
\usepackage{bm}
\usepackage{mathrsfs}
\usepackage{braket}
\usepackage{graphicx}
\usepackage[colorlinks=true,citecolor=blue,urlcolor=blue]{hyperref}
\usepackage{float}
\usepackage{tikz}

\usetikzlibrary{arrows.meta}
\usepackage{microtype}
\usepackage{amsthm}
\usepackage[normalem]{ulem} 

\newtheorem{theorem}{Theorem}
\newtheorem{lemma}{Lemma}
\newtheorem{corollary}{Corollary}

\definecolor{indiagreen}{rgb}{0.07, 0.53, 0.03}
\definecolor{teal}{rgb}{0.0, 0.53, 0.53}

\begin{document}
\title{Robust Genuine Multipartite Entanglement in Two Walker Quantum Walks}
\author{Sandipan Hazra}
\email{sandipan.hazra.phy@gmail.com}
\author{Tamoghna Das}
\affiliation{Department of Physics, Indian Institute of Technology Kharagpur, Kharagpur 721302, India}

\author{Sougato Bose}
\affiliation{Department of Physics and Astronomy, University College London, London WC1E 6BT, United Kingdom}
\author{Sonjoy Majumder}
\affiliation{Department of Physics, Indian Institute of Technology Kharagpur, Kharagpur 721302, India}
\date{\today}

\begin{abstract}
Discrete-time quantum walks provide a versatile framework for investigating the generation, redistribution, and transport of quantum correlations in composite quantum systems. Here, we study the dynamics of bipartite and genuine multipartite entanglement in a two-walker discrete-time quantum walk on a one-dimensional lattice. By employing logarithmic negativity and the generalized geometric measure (GGM), we systematically characterize the redistribution of bipartite entanglement among different subsystem partitions and the emergence of genuine multipartite entanglement involving the two coin and two position degrees of freedom. We show that the entanglement dynamics are strongly influenced by the lattice topology. The open-boundary regime exhibits a monotonic redistribution of quantum correlations, whereas the closed-boundary regime gives rise to pronounced oscillatory behavior due to boundary-induced interference and recurrent wave-packet overlap. In the open-boundary regime, the GGM rapidly approaches its theoretical maximum value of $1/2$ and remains largely insensitive to the choice of the initial Bell state as well as to continuous variations of the local coin operator over a broad parameter range, except near the Pauli-$X$ coin. These results demonstrate that maximal genuine multipartite entanglement generation is a robust and generic feature of open-boundary two-walker discrete-time quantum walks, establishing them as promising platforms for engineering multipartite quantum correlations in quantum information processing and quantum simulation.
\end{abstract}

\maketitle
\section{Introduction}

Quantum correlations \cite{RevModPhys.81.865,RevModPhys.84.1655} are fundamental resources for quantum information science and underpin a wide range of quantum technologies, including quantum communication \cite{PhysRevLett.69.2881}, quantum computation \cite{Bennett2000}, quantum cryptography \cite{RevModPhys.74.145}, quantum sensing \cite{RevModPhys.89.035002}, and quantum simulation~\cite{RevModPhys.86.153}. Consequently, understanding the generation, evolution, and preservation of quantum correlations in composite quantum systems is a central problem in the foundations of quantum mechanics and quantum information science. Quantum walks~\cite{PhysRevA.48.1687,Farhi1998,Kempe01072003} provide a fundamental framework for studying coherent quantum transport and interference and have emerged as versatile tools in quantum information processing, quantum simulation, and the investigation of quantum dynamics. Two principal formulations of quantum walks have been extensively investigated in the literature: Aharonov et al. introduced discrete-time quantum walks (DTQWs)~\cite{PhysRevA.48.1687,Kempe01072003} and their continuous-time version, known as continuous-time quantum walks (CTQWs)~\cite{Farhi1998,Childs2003} was formulated within a few years. Both models have played central roles in the development of quantum algorithms~\cite{Childs2003,PhysRevA.67.052307,2462630,PhysRevLett.102.180501,Lovett2012}, universal quantum computation~\cite{PhysRevLett.102.180501,PhysRevA.81.042330}, quantum state transfer and transport~\cite{PhysRevA.83.062315,Stefanak2017,Ikken2025}, and quantum simulation~\cite{Mallick_2019, Arnault_2020}.\\
\indent In this work, we focus on DTQWs, which are characterized by the interplay between an internal coin degree of freedom and conditional position shifts. The presence of the internal coin degree of freedom gives rise to rich interference effects, ballistic spread, and highly nonclassical transport properties~\cite{nayak2000quantumwalkline, VenegasAndraca2012}. Experimentally, quantum walks have been realized on a variety of physical platforms, including trapped ions~\cite{PhysRevLett.103.090504}, neutral atoms in optical lattices~\cite{doi:10.1126/science.1174436}, photonic systems~\cite{PhysRevLett.104.153602,doi:10.1126/science.1193515,PhysRevLett.100.170506}, superconducting circuits~\cite{PhysRevX.7.031023,5lm1-2kpk}, and nuclear magnetic resonance systems~\cite{PhysRevA.72.062317}.\\
\indent Entanglement generation in DTQWs has been extensively investigated for single-particle quantum walks, where coin-position entanglement emerges naturally during the evolution~\cite{PhysRevA.73.042302,Carneiro_2005,Goyal_2010,PhysRevA.105.042216}. Multi-particle quantum walks exhibit significantly richer correlation structures owing to the coexistence of inter and intra-particle entanglement channels~\cite{PhysRevA.74.042304,PhysRevA.75.032351,Rodriguez2015,Carson2015,PhysRevA.100.042110}. In particular, two-particle quantum walks provide a minimal yet versatile setting for investigating entanglement transfer, nonlocal quantum correlations, and interference-induced dynamics.\\
\indent Although these studies have significantly advanced the understanding of bipartite quantum correlations in quantum walks, a systematic characterization of bipartite entanglement using logarithmic negativity\cite{PhysRevA.65.032314} together with genuine multipartite entanglement in two-walker discrete-time quantum walks remains largely unexplored. Very few previous works studied entanglement in the quantum walk formalism using logarithmic negativity~\cite{PhysRevE.108.024139,PhysRevA.110.052428}. On the other hand, genuine multipartite entanglement captures quantum correlations shared collectively among all constituent subsystems, and therefore cannot be inferred solely from bipartite entanglement measures. Among the available quantifiers, the generalized geometric measure (GGM) \cite{PhysRevA.81.012308}, provides an operationally meaningful and computationally efficient measure of genuine multipartite entanglement for arbitrary pure multipartite states.\\
\indent An equally important question concerns the robustness of multipartite entanglement generated during the quantum walk. In particular, it remains unclear whether the generation of genuine multipartite entanglement is intrinsically tied to a specific initial Bell state or coin operator, or whether it represents a generic feature of the quantum-walk dynamics. Furthermore, the influence of the lattice topology, the initial coin entanglement, and the local coin operation on the generation and stability of genuine multipartite entanglement has not been systematically investigated.

\indent In this work, we address these intriguing questions by investigating first the dynamics of two non-interacting walkers on a one-dimensional lattice using logarithmic negativity and the generalized geometric measure (GGM) to characterize bipartite and genuine multipartite entanglement (GME), respectively. We analyze the redistribution of bipartite entanglement among all physically relevant subsystem partitions and 
qualitatively understand the dependence of the entanglement dynamics on the evolution of walkers in the open- or closed-boundary regime. We further demonstrate that, within the class of initial Bell states and local coin operators considered here, the generated multipartite entanglement is remarkably robust and rapidly approaches its theoretical maximum in the open-boundary regime. These results may establish maximal genuine multipartite entanglement generation as a robust and generic feature of two-walker discrete-time quantum walks, thereby establishing DTQWs as promising platforms for engineering multipartite quantum correlations.

The remainder of the paper is organized as follows. Section~II introduces
the two-walker discrete-time quantum-walk model, while Sec.~III presents
the entanglement measures used in our analysis. Section~IV examines the
entanglement dynamics in the open- and closed-boundary regimes. Section~V
investigates the robustness of genuine multipartite entanglement against
variations in the initial coin state and local coin operator. Finally,
Sec.~VI summarizes the main findings and their implications. Analytical
derivations are provided in the Appendices.
\section{Theory}
\subsection{Model and Formalism}
We consider a discrete-time quantum walk (DTQW) of two non-interacting walkers (A and B) on a finite one-dimensional lattice with $N$ vertices. Each walker possesses two degrees of freedom: one is internal, related to the coin, and the other is external, governed by the shifting position. The total Hilbert space can be expressed as
\begin{equation}
\mathcal{H} = \mathcal{H}_{C_A} \otimes \mathcal{H}_{P_A}\otimes \mathcal{H}_{C_B} \otimes \mathcal{H}_{P_B},
\end{equation}
where $\mathcal{H}_{C_i} \cong \mathbb{C}^2$ is spanned by the orthonormal coin basis states $\{|0\rangle,|1\rangle\}$, and $\mathcal{H}_{P_i} \cong \mathbb{C}^N$ is spanned by the position basis states $\{|x\rangle\}$ with $x\in\{0,1,\ldots,N-1\}$, for $i\in\{A,B\}$. Here, $|0\rangle$ and $|1\rangle$ denote the two basis states of the coin space of each walker, while $|x\rangle$ represents the walker localized at the lattice site labeled by the integer $x$. 

The DTQW evolution is governed by a coin operator $O_{C_{A(B)}}$ followed by a conditional shift operator $S_{A(B)}$. The total coin operator functions locally on each walker and is expressed as
\begin{equation}
C_{AB} = (O_{C_A} \otimes O_{C_B}) \otimes (I_{P_A} \otimes I_{P_B}),
\label{def_C}
\end{equation}
In this study, we use the Hadamard operator as the local coin, as it produces an unbiased quantum walk by allocating equal amplitudes to the left- and right-moving states. Due to its simplicity and balanced dynamics, the Hadamard coin has been extensively employed in the discrete-time quantum-walk literature \cite{10.1145/380752.380757, 7rwg-lhpv}. So,
\begin{equation*}
    O_{C_A}=H=
\frac{1}{\sqrt{2}}
\begin{pmatrix}
1 & 1\\
1 & -1
\end{pmatrix}=O_{C_B} .
\end{equation*}
\\Similarly the joint shift operator is defined as
\begin{equation}
S_{AB} = S_A \otimes S_B,
\label{def_S}
\end{equation}
where each conditional single-party shift operator is
\begin{equation}
\begin{aligned}
S_{A(B)}
=
\sum_x \Big(
&\ket{0}\bra{0}_{C_{A(B)}}
\otimes
\ket{x + 1}\bra{x}_{P_{A(B)}}
\\
+ 
&\ket{1}\bra{1}_{C_{A(B)}}
\otimes
\ket{x-1}\bra{x}_{P_{A(B)}}
\Big).
\end{aligned}
\label{Eq:S_A}
\end{equation}
At each discrete step, $t \in \mathbb{N}$, the joint quantum state is governed by $\ket{\Psi(t)}_{AB} = U_{AB} \ket{\Psi(t-1)}_{AB}$, where the unitary evolution operator is
\begin{equation}
U_{AB} = S_{AB} \hspace{0.02in}C_{AB}.
\label{def_U}
\end{equation}
It is important to note that the joint unitary operator $U_{AB}$ can be expressed as a tensor product of two local unitary operators, 
\begin{equation}
    U_{AB}=U_A \otimes U_B, 
    \label{eq:U_AB}
\end{equation}
where $U_A=S_A(O_{C_A}\otimes I_{P_A})$ and $U_B=S_B(O_{C_B}\otimes I_{P_B})$.\\
The dynamics of quantum walk are driven by repeated application of the joint unitary operator $U_{AB}$. Since the evolution operator has no explicit time dependence, the same unitary operator acts at every discrete time step. Consequently, after discrete steps $t$, the final state evolves as
\begin{equation}
\ket{\Psi(t)}_{AB} = (U_{AB})^t \ket{\Psi(0)}_{AB}, \label{eq:finalPsit}
\end{equation}
where $(U_{AB})^t$ represents the successive application of the same unitary operator $U_{AB}$ exactly $t$ times, as $t\ \in \mathbb{N}$.
\subsection{Boundary Conditions}
Suppose at $t = 0$, the walker starts
 from a localized position state $|x\rangle$, and after $t = T$, the walker can occupy a maximum lattice distance $T$ considering unit lattice spacing from either side of its initial position $|x\rangle$, resulting in the total spatial extent explored by the walker to be at most $2T+1$ lattice sites.\\
\indent In the present work, we consider a finite lattice  of $N$ sites with site indices
$x \in \{0,1,2,\ldots,N-1\}$ and distinguish between two physically different regimes:
\subsubsection{Open-boundary regime ($N>2T$)} In this case, the lattice size exceeds the maximum spread of the walkers during the evolution. The walkers never reach the lattice boundaries within $T$ steps, and the dynamics is effectively identical to that on an infinite one-dimensional lattice.
\subsubsection{Closed-boundary regime ($N<2T$)}Here, the lattice size is smaller than the maximum spatial extent of the walk. The walkers can traverse the entire lattice and encounter the boundaries during the evolution. As a result, finite-size effects become important and multiple paths interfere after wrapping around the lattice. For this case,  the symbols $``+"$ and $``-"$ of Eq. \eqref{Eq:S_A} will reduce to the ``modulo N plus" and ``modulo N minus" i.e., $ + \equiv \oplus_N \hspace{0.1in}(- \equiv \ominus_N)$\footnote{The symbols $\oplus_N$ and $\ominus_N$ denote addition and subtraction modulo $N$, respectively, i.e., $x\oplus_N1=(x+1)\bmod N$ and  $x\ominus_N1=(x-1)\bmod N$.}, thereby implementing the periodic boundary condition on the lattice.
\\Throughout the work, whenever we consider the closed-boundary regime, we use the fact that 
\begin{equation}
     \ket{x\oplus_N N}\equiv\ket{x},
\end{equation}
and all position coordinates are understood modulo $N$, as $N$ being the total number of lattice sites considered. In this case, the walkers re-enter the lattice from the opposite side after crossing a boundary, leading to additional interference effects, which is absent in the open-boundary regime.
\begin{figure}[ht]
\centering
\begin{tikzpicture}[scale=1.0]
\draw[thick] (90:2) arc[start angle=90,end angle=210,radius=2];
\draw[thick,dotted] (210:2) arc[start angle=210,end angle=270,radius=2];
\draw[thick,dotted] (270:2) arc[start angle=270,end angle=330,radius=2];
\draw[thick] (330:2) arc[start angle=330,end angle=450,radius=2];
\node[circle,fill=black,inner sep=1.5pt,label=above:{$x=0$}] (n0) at (90:2) {};
\node[circle,fill=black,inner sep=1.5pt,label=left:{$x=1$}] (n1) at (150:2) {};
\node[circle,fill=black,inner sep=1.5pt,label=below left:{$x=2$}] (n2) at (210:2) {};
\node[circle,fill=black,inner sep=1.5pt,label=below right:{$x=N-2$}] (n3) at (330:2) {};
\node[circle,fill=black,inner sep=1.5pt,label=right:{$x=N-1$}] (n4) at (30:2) {};
\node[circle,draw,fill=white,thick,inner sep=1.5pt,
      label=below:{$x=\lfloor N/2 \rfloor$}] (start) at (270:2) {};
\draw[-Stealth,thick]
(-1.6,-2.6) -- (-0.18,-2.05);
\node[align=center] at (-2.8,-2.8)
{Initial position\\of both walkers};
\node at (0,0) {$N$ sites};
\end{tikzpicture}
\caption{
One-dimensional lattice with periodic boundary conditions. The lattice sites are labelled by
$x=0,1,2,\ldots,N-1$, with the identification
$\ket{x+N}\equiv\ket{x}$.
The walkers are initially localized at the central lattice site
$x=\lfloor N/2\rfloor$, where $\lfloor\cdot\rfloor$ denotes the floor function. The periodic identification makes the sites $x=0$ and $x=N-1$ nearest neighbours, resulting in a closed lattice topology.
}
\label{fig:ring}
\end{figure}

\subsection{Initial States}
At time $t = 0$, we now posit that both walkers are localized at the central point of a one-dimensional lattice illustrated in Fig. \ref{fig:ring}, indicated by an arrow. Hence, the initial joint quantum state, will take the form of
\begin{equation}
\ket{\Psi(0)}_{AB}
=
\ket{\psi}_{C_AC_B}
\otimes
\ket{x_A,x_B:x_A=x_B=\lfloor N/2\rfloor}_{P_AP_B},
\end{equation}
where $\ket{\psi}_{C_AC_B}$ denotes the initial two-coin state and $\lfloor N/2 \rfloor$ is the greatest integer less than or equal to $N/2$.\\
\indent We have considered two fundamentally different choices of the initial coin states, namely, the separable states and the maximally entangled Bell states.
We consider two separable initial coin states,
\begin{align}
\ket{\psi_{1}}_{C_AC_B} &= \ket{01}_{C_AC_B}, \\
\ket{\psi_{2}}_{C_AC_B} &= \ket{0+}_{C_AC_B},
\end{align}
where
\[
\ket{+}=\frac{1}{\sqrt{2}}(\ket{0}+\ket{1}),
\]
together with the four maximally entangled Bell states,
\begin{align}
\ket{\psi_{3}}_{C_AC_B}
&=
\ket{\Psi^-}_{C_AC_B}
=
\frac{1}{\sqrt{2}}
\left(
\ket{01}
-
\ket{10}
\right)_{C_AC_B},
\\
\ket{\psi_{4}}_{C_AC_B}
&=
\ket{\Psi^+}_{C_AC_B}
=
\frac{1}{\sqrt{2}}
\left(
\ket{01}
+
\ket{10}
\right)_{C_AC_B},
\\
\ket{\psi_{5}}_{C_AC_B}
&=
\ket{\Phi^-}_{C_AC_B}
=
\frac{1}{\sqrt{2}}
\left(
\ket{00}
-
\ket{11}
\right)_{C_AC_B},
\\
\ket{\psi_{6}}_{C_AC_B}
&=
\ket{\Phi^+}_{C_AC_B}
=
\frac{1}{\sqrt{2}}
\left(
\ket{00}
+
\ket{11}
\right)_{C_AC_B}.
\end{align}
For both qualitatively different initial coin states, we will study the evolution of the bipartite as well as multipartite  entanglement developed among the subsystems during the DTQW. However, before proceeding, in the next section we briefly discuss the entanglement measures we used to quantify the entanglement among the subsystems.

\section{Measures of Bipartite and Multipartite Entanglement}
To characterize the quantum correlations generated during the discrete-time quantum walk of two walkers, we use the following two measures. 
We use logarithmic negativity ($E_{\mathcal N}$) ~\cite{Peres1996, Horodecki1996} to quantify bipartite entanglement among sub-systems. 
\begin{equation}
E_{\mathcal N}(\rho_{AB})
=\log_2
\left(
\sum_i |\lambda_i|
\right),
\label{eq}
\end{equation}
where $\lambda_i$ are the eigenvalues of partial transpose of $\rho_{AB}$ with respect to subsystem $A$ or $B$.

To quantify the genuinely multipartite entanglement of the total system, we will use Generalized Geometric Measure (GGM) \cite{PhysRevA.81.012308}.

For an $M$-party pure state $\ket{\Psi_M}$, the generalized geometric measure (GGM) is defined as
\begin{equation}
GGM(\ket{\Psi_M})
=
1-
\max_{\ket{\Phi_M}}
\left|
\langle \Phi_M | \Psi_M \rangle
\right|^2,
\label{eq:GGM_def}
\end{equation}
where maximization is taken over all possible pure states $\ket{\Phi_M}$ that are not genuinely multipartite entangled, i.e., states that are product at least in one bipartition.\\
One can easily verify that the above expression of GGM reduces to a much more computationally convenient expression, given by
\begin{equation}
GGM(\ket{\Psi_M})=1-\max_{\mathcal A:\mathcal B}
\left\{\lambda_{\mathcal A:\mathcal B}^{2}, \mathcal A \cap \mathcal B = \emptyset \text{ and } \mathcal A \cup \mathcal B = M \right\},
\label{eq:GGM}
\end{equation}
where $\lambda_{\mathcal A:\mathcal B}$ denotes the largest Schmidt coefficient across the bipartition $\mathcal A:\mathcal B$, with $\mathcal A$ and $\mathcal B$ forming a partition of the possible subsystems. The maximization is performed over all nontrivial bipartitions of the system.

For the present two-walker DTQW, the total state evolves in the multipartite Hilbert space
$\mathcal{H} = \mathcal{H}_{C_A} \otimes \mathcal{H}_{P_A}\otimes \mathcal{H}_{C_B} \otimes \mathcal{H}_{P_B}$ where the maximization is carried out over all nontrivial bipartitions of these four subsystems.
The GGM therefore quantifies the genuine multipartite entanglement among the four degrees of freedom shared among the coin and position spaces of the two walkers.\\
\indent The simultaneous use of logarithmic negativity and GGM enables us to distinguish between bipartite entanglement distribution among the subsystems and the generation of genuine multipartite quantum correlations during the quantum walk dynamics.

\section{Entanglement Dynamics}

In this section, we discuss the generation of bipartite and multipartite entanglement of various time-evolved DTQW cases. For example, when the lattice sites of the walkers are unconstrained, i.e., in the open-boundary regime and when it is constrained to follow closed-boundary conditions for various choices of the coin-coin input states. 

\subsection{Open-boundary regime ($N>2T$)}
Here we consider the situation where both walkers starting from a middle point $\lfloor N/2 \rfloor$ with $N  = 2T +1$ move freely in either direction. 
We first focus on the condition where both the initial state of the composite coin is unentangled or separable (considering the ``separable" states here are pure product states).
\subsubsection{Separable Coin-Coin State} 
\label{subsection-separable}
The entanglement dynamics of a two-walker quantum walk with an initially separable coin-coin state has been investigated to some extent in~\cite{Singh_2019}. Since the evolution is governed by the joint unitary $U_{AB}$, given in 
 Eq. \eqref{eq:finalPsit}, which is in product form for the $A: B$ bipartition, it is trivial to prove that there will be no bipartite entanglement in the  $C_A:C_B $, $P_A:P_B$, $C_AP_A:C_BP_B$, $C_A:P_B$ and $C_B:P_A$ bipartitions. However, the entanglement can develop for the $C_AC_B:P_AP_B$, $C_A:P_A$ and $C_B:P_B$ cuts. This follows directly from the local structure of the evolution operator, $U_{AB}=U_A\otimes U_B$, together with the initially separable state
$\ket{\Psi(0)}_{AB}=\ket{\Psi(0)}_A\otimes\ket{\Psi(0)}_B.$

Consequently, the evolved state retains the factorized form
$\ket{\Psi(t)}_{AB}=U_A^t\ket{\Psi(0)}_A \otimes U_B^t\ket{\Psi(0)}_B,$
which precludes the generation of entanglement between the two walkers. Moreover, since $|\Psi(t)\rangle$, is always a product in $C_AP_A:C_BP_B$, there will be no genuine multipartite entanglement. 
\begin{figure}[t]
    \subsubsection*{For N=51}
    \includegraphics[width=1\linewidth]{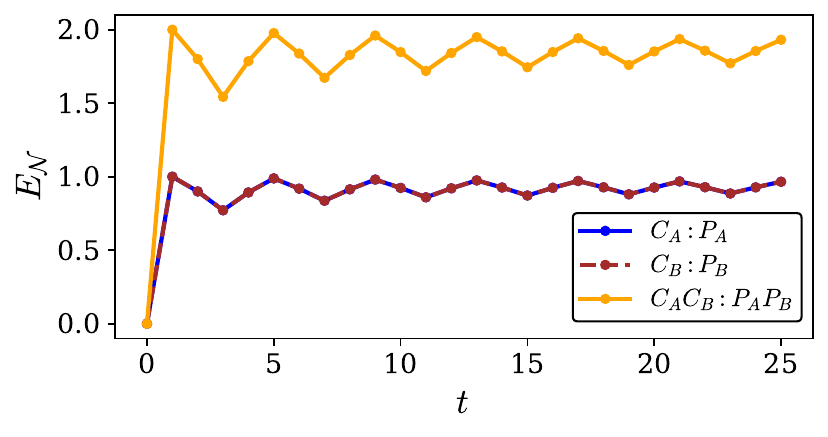}  
    \caption{ (Color online.)
    Time evolution of bipartite entanglement (in ebit unit) in various bipartitions has been plotted as a function of number of steps $t$ ( dimensionless), in the two-walker DTQW, when the initial coin: coin quantum state is a product state $\ket{\psi_{1}}_{C_AC_B}=\ket{01}_{C_AC_B}$. The entanglement has been calculated with the help of logarithmic negativity in the bipartitions $C_A: P_A$ (by the solid blue line) and $C_B: P_B$ (by the dashed brown line) $C_AC_B: P_AP_B$ (by the solid orange curve). Here we consider the open-boundary regime with $ N=51$ lattice points and a maximum allowed step count of $T=25$.}
    \label{fig:N101_T50_sep}
\end{figure}

Fig.~\ref{fig:N101_T50_sep} demonstrates the entanglement variations of the non-trivial bipartitions, where the entanglement is developed from 
 the initial coin state, $\ket{\psi_{1}}_{C_AC_B}=\ket{01}_{C_AC_B}$, due to the DTQW.  
 Here, the maximum lattice size is chosen to be $N=51$ satisfying $N>2T$. Therefore, walkers do not encounter the boundaries when $t < T = 25$. This dynamics effectively corresponds to the evolution of entanglement on an unbounded one-dimensional lattice. 

 Note that, in principle, the $T$ can be arbitrarily large, but for the ease of our numerical calculation, we impose the constraint of maximum lattice size and maximum time steps.

The behavior of the collective coin-position entanglement, $E_\mathcal{N}(C_AC_B: P_AP_B)$, has been plotted in Fig.~\ref{fig:N101_T50_sep}, by the 
solid orange line as a function of discrete time $t$. 
It first rapidly increases, reaches its maximum at $t = 1$, and then oscillates with a small amplitude. The range of oscillations decreases over time and asymptotically approaches the maximum value. 

\begin{figure}[t]
    \subsubsection*{For N=51}  
    \includegraphics[width=1\linewidth]{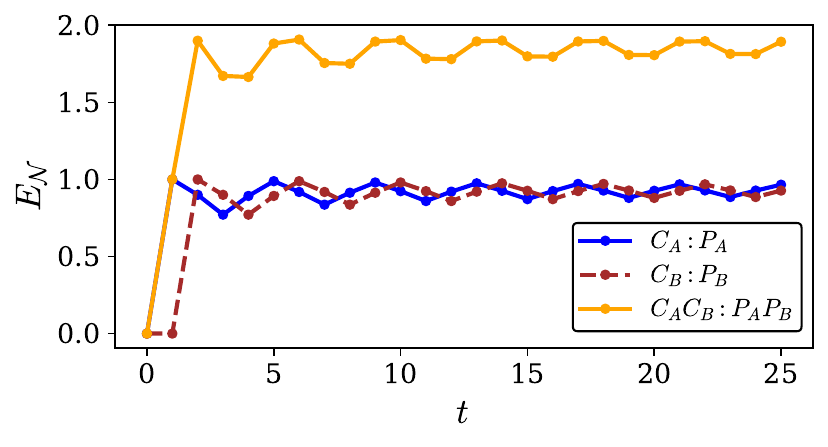}
    \caption{(Color online.)
    Time evolution of bipartite entanglement (in ebit unit) in various bipartitions has been plotted as a function of number of steps $t$ ( dimensionless), in the two-walker DTQW, when the initial coin: coin quantum state is a product state $\ket{\psi_{2}}_{C_AC_B}=\ket{0+}_{C_AC_B}$. The entanglement has been calculated with the help of logarithmic negativity in the bipartitions $C_A: P_A$ (by the solid blue line) and $C_B: P_B$ (by the dashed brown line) $C_AC_B: P_AP_B$ (by the solid orange curve). Here, we consider the open-boundary regime with $N=51$ lattice points and a maximum allowed step count of $T=25$.}
    \label{fig:N51_T25_sep_0+}
\end{figure}

This behaviour of entanglement demonstrates that the quantum walk efficiently generates correlations between the collective coins and their positions while preserving the separability of the two walkers.

The single-particle behavior has also been investigated considering the reduced density matrix $\rho_{A(B)} = \text{tr}_{B(A)}\left( \ket{\Psi(t)}\bra{\Psi(t)}_{AB}\right)$ for walker $A$ or $B$. Although the initial coin states of $A$ and $B$ are mutually orthogonal, our numerical calculations find that the bipartite entanglement between the coin and position of $A$ walker, ($E_\mathcal{N}(C_A:P_A)$, is identical with the same of $B$, $E_\mathcal{N}(C_B:P_B)$), shown in Fig.~\ref{fig:N101_T50_sep} by solid blue and dashed brown lines, respectively. 
This is due to the parity symmetry of the independent operations of $A$ and $B$ about the lattice point $N/2$. 
Moreover, one can easily check that for the pure product state the collective coin:position entanglement is twice that of a single walker, as $E_\mathcal{N}(C_AC_B: P_AP_B) = E_\mathcal{N}(C_A:P_A) + E_\mathcal{N}(C_B:P_B) = 2E_\mathcal{N}(C_A:P_A)$. The first equality emerges from the additivity property of Logarithmic negativity for product states, while the second one is due to the identical values obtained for the bipartite entanglement between the subsystems of individual walkers A and B.

We have also investigated the dynamics for another initial coin state in product form, $\ket{\psi_2}_{C_AC_B}=\ket{0+}_{C_AC_B}$, where $\ket+=\frac{1}{\sqrt2}(\ket{0}+\ket1)$. The dynamics is qualitatively mostly similar, with a  difference in the inequal profiles of $E_\mathcal{N}(C_A:P_A)$ and $E_\mathcal{N}(C_B:P_B)$ , as plotted in Fig. \ref{fig:N51_T25_sep_0+}.

Hence, we can conclude that entanglement is generated only between the coin and position degrees of freedom belonging to the same walker. The absence of cross correlations reflects the fact that the two walkers evolve independently throughout the quantum walk.

\subsubsection{Maximally Entangled Coin-Coin State}
\begin{figure*}[ht]
\label{GGM_gen}
\subsubsection*{For $N=51$}
\centering

\begin{minipage}[ht]{0.45\linewidth}
    \centering
    \includegraphics[width=\linewidth]{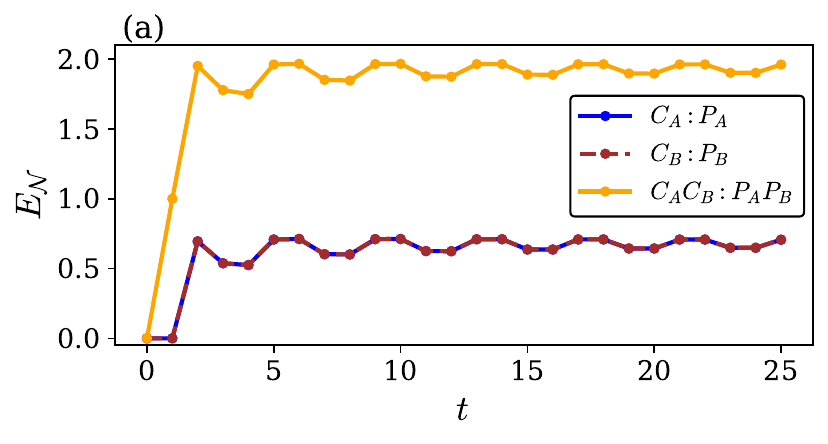}
    \includegraphics[width=\linewidth]{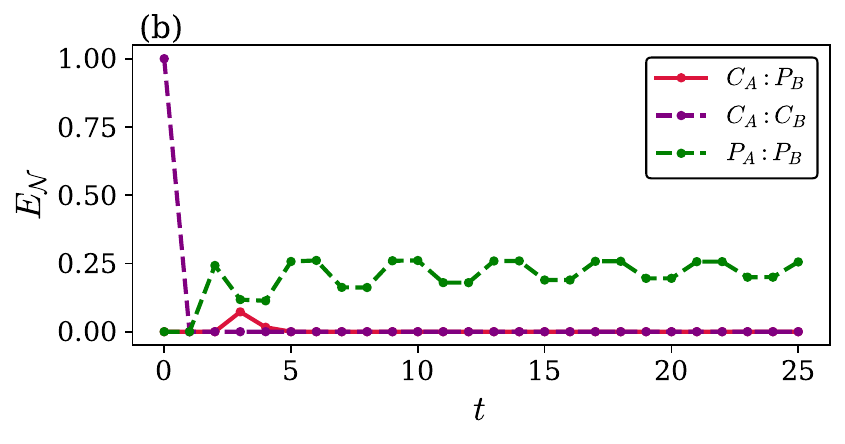}
\end{minipage}
\hfill
\raisebox{-0.1cm}{%
\begin{minipage}[ht]{0.53\linewidth}
    \centering
    \includegraphics[width=\linewidth]{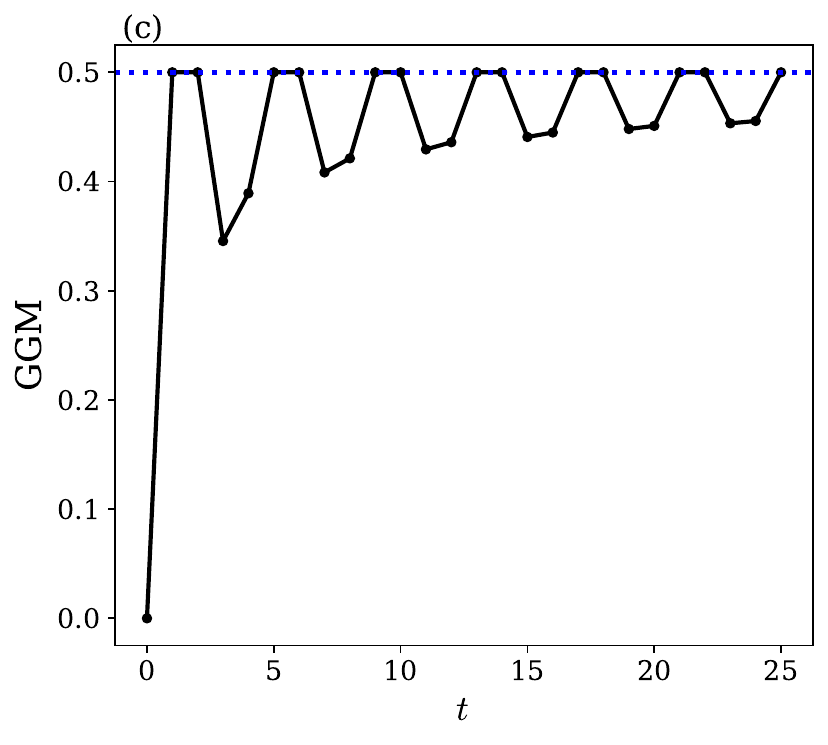}
\end{minipage}}
\caption{(Color online.)
    Time evolution of bipartite entanglement (in ebit unit) in various bipartitions and GGM (dimensionless) has been plotted as a function of number of steps $t$ (dimensionless), in the two-walker DTQW, when the initial coin: coin quantum state is $\ket{\psi_{3}}_{C_AC_B}=\frac{1}{\sqrt{2}} \left( \ket{01}-\ket{10} \right)_{C_AC_B}$ The entanglement has been calculated with the help of logarithmic negativity in the bipartitions (a) $C_A:P_A$, $C_B:P_B$ and $C_AC_B:P_AP_B$, and (b) $C_A:P_B$ , $C_A:C_B$ and $P_A: P_B$ whereas in (c) the generalized geometric measure (GGM) of $\ket{\psi(t)}_{AB}$ has been plotted. Here, we consider the open-boundary regime with $N=51$ lattice points and a maximum allowed step count of $T=25$.}
\label{fig:N51_T25_entangled}
\end{figure*}
In this section, we are going to investigate the generation of bipartite and multipartite entanglement and its dynamical behavior, when the initial coin state is chosen to be the Bell singlet state, i.e., 
$$\ket{\psi_{3}}_{C_AC_B}
=\frac{1}{\sqrt{2}}
\left(
\ket{01}-\ket{10}
\right)_{C_AC_B}.$$

In contrast to the separable initial state, the presence of initial coin-coin entanglement leads to the generation of nontrivial multipartite correlations in different bipartitions involving both the coin and position degrees of freedom.
In the case of single walker subsystem, the entanglement between the coin:position bipartition ($E_\mathcal{N}(C_A:P_A)$ and $E_\mathcal{N}(C_B:P_B)$)  has been plotted by solid blue line and dashed brown line in Fig.~\ref{fig:N51_T25_entangled}(a). 
Note that the time evolved single walker reduced density matrices $\rho_x, ~~x \in \{A,B\}$ are identical to each other, as for both the walker initial coin state is a maximally mixed one, resulting 
$E_\mathcal{N}(C_B:P_B) = E_\mathcal{N}(C_A:P_A)$. The entanglement rapidly grows initially from $0$ to the value of around $0.69$ ebits and subsequently oscillates near to the steady-state value, emphasizing our claim of redistribution of coin:coin entanglement among all the remaining subsystems. Simultaneously, the collective coin-position entanglement $E_\mathcal{N}(C_AC_B:P_AP_B)$ grows rapidly and saturates near its maximum value. This behaviour of collective bipartite entanglement is qualitatively similar to the product input state, and we can comment that this oscillation is the artifact of the joint coin-position unitary operator describing the DTQW model. \\
\indent Note that, unlike the pure product initial state in the coin-coin system, we can not apply the additivity property of the Logarithmic Negativity in the joint coin:position, we have found that the collective entanglement in the coin:position bipartition is superadditive ($E_\mathcal{N}(C_AC_B: P_AP_B) > E_\mathcal{N}(C_A:P_A) + E_\mathcal{N}(C_B:P_B)$) in nature. The observed superadditivity is due to the initial coin-coin entanglement in the system.

Fig.~\ref{fig:N51_T25_entangled}(b) depicts that the initial coin-coin entanglement is rapidly redistributed among all the subsystems during the evolution, among all the subsystem. Starting from its maximal value at $t=0$, the coin-coin entanglement, quantified by the bipartite entanglement measure, logarithmic negativity $E_\mathcal{N}(C_A:C_B)$, quickly disappears (as it is clear from Eq. \eqref{eq:Bpsi1} - \eqref{eq:Bpsi5}), at the very first action of the DTQW unitary $U_{AB}$,  and remains zero thereafter for all $t > 1$.  A small but finite position-position entanglement $E_\mathcal{N}(P_A:P_B)$, has also been generated during the time evolution, which was absent for product input state. These observations indicate that the quantum correlations initially localized within the coin subsystem are progressively transferred to the larger Hilbert space comprising both coin and position degrees of freedom. In cross-correlations the entanglement, ($E_\mathcal{N}(C_A:P_B)$ = $E_\mathcal{N}(C_B:P_A)$), remains zero throughout the evolution, with a very small non-zero value at $t = 3$ and $t=4$. Thus, although the initial Bell-state entanglement spreads throughout the system, the dominant contribution to the bipartite correlations originates from the coupling between each coin and its corresponding position degree of freedom. These results indicate that the entanglement initially present in the Bell state is redistributed across the four-partite system as the quantum walk evolves. However, the largest share of the entanglement remains concentrated between the coin and position degrees of freedom of each individual walker. This behavior can be attributed to the local coin-position coupling generated by the single-walker evolution operator.\\
\indent The most striking feature of the dynamics is revealed when we compute the genuine multipartite entanglement content in the entire system, 
by calculating the GGM of $\ket{\Psi(t)}_{AB}$ (see  Fig.~\ref{fig:N51_T25_entangled}(c)).
The GGM is initially zero because the state $\ket{\Psi(t=0)}_{AB}$ is separable across the collective coin-position bipartition. Following the first step of the quantum walk, it increases sharply to its maximum value and remains there during the initial stages of the evolution. At later times, the GGM exhibits oscillatory dynamics and repeatedly returns to its maximum value. Since the theoretical upper bound of the GGM is $1/2$ (see Appendix~\ref{Max_GGM}), these revivals signify that the coin-position interactions continuously redistribute the initial bipartite entanglement of the coins into genuine multipartite correlations involving all four subsystems, $C_A$, $P_A$, $C_B$, and $P_B$.\\
\indent An even more striking feature emerges from the lower envelope of the oscillations, which can be obtained by connecting the successive local minima of $GGM(\ket{\Psi(t)}_{AB})$ of Fig.~\ref{fig:N51_T25_entangled}(c). The local minima increase monotonically with time and asymptotically approach the theoretical upper bound, $GGM=1/2$. Thus, while the GGM repeatedly reaches its maximum value throughout the evolution, even its minimum value progressively converges to the same limit. Consequently, the amplitude of the oscillations steadily diminishes, and the dynamics become increasingly confined to a narrow region around the maximal GGM. This behaviour demonstrates that the initial bipartite entanglement is progressively redistributed into persistent genuine multipartite correlations among the four subsystems, $C_A$, $P_A$, $C_B$, and $P_B$. At sufficiently long times, the system evolves in a regime where the genuine multipartite entanglement remains close to its theoretically allowed maximum.\\
\indent Since the present regime satisfies $N>2T$, the walkers never encounter the lattice boundaries and therefore evolve as if they were propagating on an effectively infinite line. Consequently, the observed approach of the GGM towards its maximal value is an intrinsic feature of the unitary quantum-walk dynamics and is not influenced by finite-size effects or boundary-induced recurrences. These results demonstrate that an initially entangled coin state, together with the conditional shift dynamics of the quantum walk, provides an efficient mechanism for generating and sustaining near-maximal genuine multipartite entanglement.

\begin{figure}[b]
    \centering
    \subsubsection*{For N=51}
    \includegraphics[width=1\linewidth]{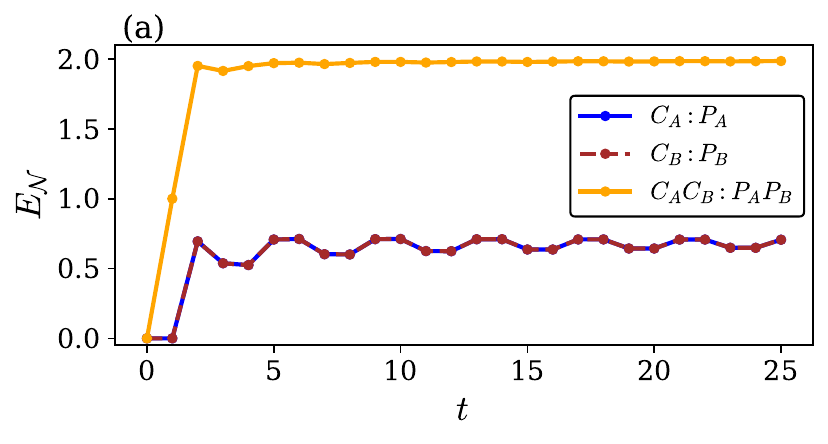}
    \includegraphics[width=1\linewidth]{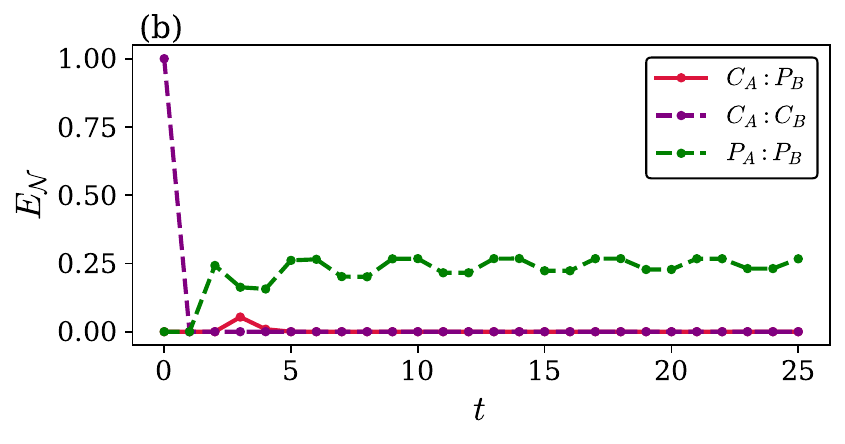}
    \caption{(Color online.)
    Time evolution of bipartite entanglement (in ebit unit) in various bipartitions has been plotted as a function of number of steps $t$ (dimensionless), in the two-walker DTQW, when the initial coin: coin quantum state is $\ket{\psi_{4}}_{C_AC_B}=\frac{1}{\sqrt{2}} \left( \ket{01}+\ket{10} \right)_{C_AC_B}$ The entanglement has been calculated with the help of logarithmic negativity in the bipartitions (a) $C_A:P_A$, $C_B:P_B$ and $C_AC_B:P_AP_B$, and (b) $C_A:P_B$ , $C_A:C_B$ and $P_A: P_B$. Here, we consider the open-boundary regime with $N=51$ lattice points and a maximum allowed step count of $T=25$.}
    \label{fig:N51_T25_entangled_psi_plus}
\end{figure}

For the initial coin state $\ket{\Phi^+}$, the entanglement dynamics are exactly identical to those depicted in Fig. \ref{fig:N51_T25_entangled}. In contrast, when the initial coin-coin state is chosen to be either $\ket{\Phi^-}$ or $\ket{\Psi^+}$, the bipartite entanglement dynamics show similar qualitative behavior but differs considerably
 from Fig.  \ref{fig:N51_T25_entangled}. The variation is shown in the following Fig. \ref{fig:N51_T25_entangled_psi_plus} for $\ket{\Psi^+}$.
Although the dynamics of the generalized geometric measure (GGM) remains unchanged, as will be discussed in the next section.

This behaviour can be understood from the symmetry properties of the Bell basis. The four Bell states naturally separate into two equivalence classes,
\begin{equation}
    \{\ket{\Psi^-},\ket{\Phi^+}\} \quad \{\ket{\Psi^+},\ket{\Phi^-}\},
\end{equation}
whose members are related by a local $\sigma_y$ operation acting on either coin state. Since $\sigma_y$ is the only Pauli operator that anti-commutes with the Hadamard coin operator, the resulting quantum walks are related by local unitary transformations combined with a spatial reflection, around the lattice site $x=\lfloor N/2\rfloor$. 
Focusing on the class $\{\ket{\Psi^-},\ket{\Phi^+}\}$, if  
$\ket{\Psi(t)}_{AB}$, and $\ket{\tilde{\Psi}(t)}_{AB}$, 
are the time evolved states, when the initial coin-coin state is $\ket{\Psi^-}$ and $\ket{\Phi^+}$ respectively, we have the following lemma.

\begin{lemma}
\label{lem:Bell_LU_equivalence}
For the two-walker DTQW considered here, the time-evolved states
corresponding to the initial coin states $\ket{\Psi^-}$ and
$\ket{\Phi^+}$ are related by a local unitary transformation, up to an
irrelevant global phase. In particular,
\begin{equation}
\ket{\widetilde{\Psi}(t)}_{AB}
=
(-1)^t
\left[
(\sigma_y)_{C_A}\otimes R_{P_A}
\right]
\ket{\Psi(t)}_{AB},
\label{eq:Bell_LU_relation}
\end{equation}
where 
$R_{P_A}$ is the reflection operator acting on the position
Hilbert space of walker $A$.
\end{lemma}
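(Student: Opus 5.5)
The plan is to prove an anticommutation relation between the local operator $W_A\equiv(\sigma_y)_{C_A}\otimes R_{P_A}$ and the single-walker step $U_A$, and then propagate it through the factorized evolution of Eq.~\eqref{eq:U_AB}. Throughout, $R_{P_A}$ is the reflection about the starting site $c=\lfloor N/2\rfloor$, namely $R\ket{x}=\ket{2c-x}$. In the closed-boundary regime this is read modulo $N$, using $\oplus_N$ and $\ominus_N$. In both regimes $R$ is unitary and involutive, and $R\ket{c}=\ket{c}$.

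\textbf{Step 1: the initial states.} A direct computation using $\sigma_y\ket{0}=i\ket{1}$ and $\sigma_y\ket{1}=-i\ket{0}$ gives
\begin{equation*}
(\sigma_y\otimes I)\ket{\Psi^-}_{C_AC_B}=i\ket{\Phi^+}_{C_AC_B}.
\end{equation*}
Combining this with $R\ket{c}=\ket{c}$, we obtain $W_A\ket{\Psi(0)}_{AB}=i\ket{\widetilde\Psi(0)}_{AB}$.

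\textbf{Step 2: the key identity $W_AU_AW_A^\dagger=-U_A$.} This is the step I expect to carry the whole argument. It splits into a coin part and a shift part.

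For the coin part, $\sigma_y H\sigma_y=-H$, because $\sigma_y$ anticommutes with both $\sigma_x$ and $\sigma_z$, and $H=(\sigma_x+\sigma_z)/\sqrt2$. Hence $W_A(H\otimes I)W_A^\dagger=-(H\otimes I)$.

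For the shift part, write $S_A=\ket{0}\bra{0}\otimes T+\ket{1}\bra{1}\otimes T^\dagger$, where $T\ket{x}=\ket{x+1}$. Two facts are needed:
\begin{itemize}
\item $\sigma_y$ swaps the two projectors: $\sigma_y\ket{0}\bra{0}\sigma_y=\ket{1}\bra{1}$, and vice versa.
\item The reflection reverses the shift: $RTR=T^\dagger$, since $RTR\ket{x}=\ket{2c-(2c-x)-1}=\ket{x-1}$.
\end{itemize}
Together these give $W_AS_AW_A^\dagger=S_A$. The only point needing care is that $RTR=T^\dagger$ remains true modulo $N$, so the periodic case goes through unchanged. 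Multiplying the coin and shift parts yields $W_AU_A=-U_AW_A$.

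\textbf{Step 3: propagate in time.} Iterating Step~2 gives $W_AU_A^t=(-1)^tU_A^tW_A$. Since $W_A$ acts only on walker $A$, it commutes with $U_B^t$, so
\begin{equation*}
W_A(U_{AB})^t=(-1)^t(U_{AB})^tW_A .
\end{equation*}
Applying both sides to $\ket{\Psi(0)}_{AB}$ and using Step~1,
\begin{equation*}
W_A\ket{\Psi(t)}_{AB}=(-1)^t(U_{AB})^t\, i\ket{\widetilde\Psi(0)}_{AB}=i(-1)^t\ket{\widetilde\Psi(t)}_{AB}.
\end{equation*}
Therefore $\ket{\widetilde\Psi(t)}_{AB}$ equals $(-1)^t W_A\ket{\Psi(t)}_{AB}$ up to the constant global phase $-i$, which is Eq.~\eqref{eq:Bell_LU_relation}.

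Because $W_A$ is a product of local unitaries on $C_A$ and $P_A$, all logarithmic negativities and the GGM coincide for the two initial states. The same argument, applied with the analogous identity $(\sigma_y\otimes I)\ket{\Psi^+}\propto\ket{\Phi^-}$, covers the other equivalence class.
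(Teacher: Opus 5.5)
Your proof is correct and follows essentially the same route as the paper. Both arguments move $\sigma_y$ through $U_A^t$ at a cost of $(-1)^t$, using $H\sigma_y=-\sigma_y H$ and the intertwining $S_A(\sigma_y\otimes I)=(\sigma_y\otimes R)S_A(I\otimes R)$ (which is equivalent to your $W_AS_AW_A^\dagger=S_A$), and then use reflection-invariance of the initial position state. Your choice $R\ket{x}=\ket{2c-x}$ (mod $N$ on the cycle) is actually cleaner than the paper's $R=\sum_x\ket{N-x}\bra{x}$: yours fixes $c=\lfloor N/2\rfloor$ exactly, whereas the paper only asserts $R\ket{\lfloor N/2\rfloor}\approx\ket{\lfloor N/2\rfloor}$, and that fails for odd $N$.
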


\begin{proof}
Note that the bell state $\ket{\phi^+}$, can be written as (neglecting the global phase)
\begin{equation}\label{eq:unitary-equivalent}
    \ket{\Phi^+} = (\sigma_y \otimes I)\ket{\Psi^-}  = (I \otimes \sigma_y) \ket{\Psi^-},
\end{equation} 
and hence the (tilde)  $\ket{\tilde{\Psi}(t)}_{AB}$, can be expressed as 
\begin{equation}\label{eq:PSITPHI}
   \ket{\tilde{\Psi}(t)}_{AB} =  (U_A)^t \otimes (U_B)^t \ket{\tilde{\Psi}(0)}_{AB},
\end{equation}
where 
$\ket{\tilde{\Psi}(0)}_{AB}
=
\ket{\phi^+}_{C_AC_B}
\otimes
\ket{\lfloor N/2\rfloor,\lfloor N/2\rfloor}_{P_AP_B}$, represents the total initial state.
By using Eq. \eqref{eq:unitary-equivalent}, one can find 
\begin{eqnarray}
     \ket{\tilde{\Psi}(t)}_{AB} &=&  (U_A)^t \otimes (U_B)^t \ket{\tilde{\Psi}(0)}_{AB} \\
     &=& (U_A)^t \left((\sigma_y)_{C_A} \otimes I_{P_A} \right)\otimes (U_B)^t \nonumber \\
     && \hspace{3em}\ket{\psi^-}_{C_AC_B} 
\otimes
\ket{\lfloor N/2\rfloor,\lfloor N/2\rfloor}_{P_AP_B},~~~
\end{eqnarray}
where the unitary opeartors acting only on the the walker $A$'s subsystem, can be expanded as 
\begin{eqnarray}\label{eq:Unitary-evolvesigma}
    (U_A)^t \left((\sigma_y)_{C_A} \otimes I_{P_A} \right) = \left( S_A (H_{C_A} \otimes I_{P_A})\right)^t \left((\sigma_y)_{C_A} \otimes I_{P_A} \right).~~
\end{eqnarray}
By using $H\sigma_y = -\sigma_y H$, and $ S_A \left((\sigma_y)_{C_A} \otimes I_{P_A} \right) = \left((\sigma_y)_{C_A} \otimes R_{P_A} \right)S_A\left(I_{C_A} \otimes R_{P_A} \right)$, where the hermitian $R_{P_A} = \sum_x \ket{N-x}\bra{x}_{P_A}$, denote the parity (reflection) operator  about the lattice site $x=\lfloor N/2\rfloor$, 
one can find that Eq. \eqref{eq:Unitary-evolvesigma}, transform as 
\begin{eqnarray}
    (U_A)^t \left((\sigma_y)_{C_A} \otimes I_{P_A} \right) &=& (-1)^t \left((\sigma_y)_{C_A} \otimes R_{P_A} \right) \times \nonumber \\ 
    && \hspace{-4em}\left( S_A (H_{C_A} \otimes I_{P_A})\right)^t \left(I_{C_A} \otimes R_{P_A} \right).
\end{eqnarray}
Now plug it into the expression of $\ket{\tilde{\Psi}(t)}_{AB}$, in Eq. \eqref{eq:PSITPHI}, and consider the fact that $R_{P_A} \ket{\lfloor N/2\rfloor} \approx \ket{\lfloor N/2\rfloor}$, we have obtained 
\begin{eqnarray}
     \ket{\tilde{\Psi}(t)}_{AB} &=&  
     (-1)^t \left((\sigma_y)_{C_A} \otimes R_{P_A} \right) \nonumber \\
    && \hspace{-2.8em}
\underbrace{(U_A)^t \otimes (U_B)^t \ket{\psi^-}_{C_AC_B} 
\otimes
\ket{\lfloor N/2\rfloor,\lfloor N/2\rfloor}_{P_AP_B}}_{\ket{{\Psi}(t)}_{AB}}, ~~~~ \\
&=&(-1)^t \left((\sigma_y)_{C_A} \otimes R_{P_A} \right) \ket{{\Psi}(t)}_{AB}.
\end{eqnarray}
This completes the proof.
\end{proof}
Hence, the two states differ only by a local unitary transformation acting on subsystems $C_A$ and $P_A$, together with the irrelevant global phase $(-1)^t$. One can equally put the $sigma_y$ in part of the coin state of walker $B$, and can show that the two time evolved state  differ by local unitaries acting on $C_B$ and $P_B$. 
Similar observation holds for the class $\{\ket{\Psi^+},\ket{\Phi^-}\}$, as stated in the following corollary. 
\begin{corollary}\label{coro:symmetric}
    In the two walker DTQW model, if the initial coin states are chosen from the set $\{\ket{\Psi^+},\ket{\Phi^-}\}$, then the corresponding time evolved states are connected through local unitary operator upto a global phase, acting in the subsystems $C_x$ and $P_x$, with $x \in \{A,B\}$.
\end{corollary}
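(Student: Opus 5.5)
The plan is to mirror the proof of Lemma~\ref{lem:Bell_LU_equivalence}, replacing $\ket{\Psi^-}$ by $\ket{\Psi^+}$ and $\ket{\Phi^+}$ by $\ket{\Phi^-}$. First I will check the algebraic relation between the two initial coin states. Using $\sigma_y\ket{0}=i\ket{1}$ and $\sigma_y\ket{1}=-i\ket{0}$, one gets
\begin{equation*}
(\sigma_y\otimes I)\ket{\Psi^+}=\tfrac{1}{\sqrt2}\left(i\ket{11}-i\ket{00}\right)=-i\ket{\Phi^-},
\end{equation*}
and similarly
\begin{equation*}
(I\otimes\sigma_y)\ket{\Psi^+}=\tfrac{1}{\sqrt2}\left(-i\ket{00}+i\ket{11}\right)=-i\ket{\Phi^-}.
\end{equation*}
So, up to the global phase $-i$, $\ket{\Phi^-}$ is obtained from $\ket{\Psi^+}$ by a local $\sigma_y$ on either coin. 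The position part $\ket{\lfloor N/2\rfloor,\lfloor N/2\rfloor}$ is common to both initial states, so the two initial states differ only by $(\sigma_y)_{C_A}\otimes I_{P_A}$, or by $(\sigma_y)_{C_B}\otimes I_{P_B}$.

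Next I will propagate this relation through the evolution. Since $U_{AB}=U_A\otimes U_B$ by Eq.~\eqref{eq:U_AB}, I only need to commute $(\sigma_y)_{C_A}\otimes I_{P_A}$ through $(U_A)^t$. The proof of the lemma already provides the two required facts: $H\sigma_y=-\sigma_yH$, and the intertwining relation $S_A\left((\sigma_y)_{C_A}\otimes I\right)=\left((\sigma_y)_{C_A}\otimes R_{P_A}\right)S_A\left(I\otimes R_{P_A}\right)$. Applying these $t$ times and using $R^2=I$, as in the lemma, gives
\begin{equation*}
(U_A)^t\left((\sigma_y)_{C_A}\otimes I\right)=(-1)^t\left((\sigma_y)_{C_A}\otimes R_{P_A}\right)(U_A)^t\left(I\otimes R_{P_A}\right).
\end{equation*}
Acting with this on the initial state, $R_{P_A}$ fixes the central site $\ket{\lfloor N/2\rfloor}$, provided the reflection is taken about that site. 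This yields
\begin{equation*}
\ket{\Psi_{\Phi^-}(t)}=i(-1)^t\left[(\sigma_y)_{C_A}\otimes R_{P_A}\right]\ket{\Psi_{\Psi^+}(t)},
\end{equation*}
which is a local unitary on $C_AP_A$ times a global phase. Repeating the argument with $\sigma_y$ placed on coin $B$ gives the analogous relation on $C_BP_B$, which covers both choices $x\in\{A,B\}$ in the statement.

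The main obstacle is the intertwining relation for the shift and the precise definition of $R$. In the open-boundary regime I will take $R\ket{x}=\ket{2\lfloor N/2\rfloor-x}$, and in the closed-boundary regime the same map read modulo $N$. With this choice, $R$ maps $\ket{x+1}$ to $\ket{2\lfloor N/2\rfloor-x-1}$, so $S_AR$ correctly swaps the roles of the two coin projectors. Since $\sigma_y$ exchanges $\ket{0}$ and $\ket{1}$, the relation then follows, and $R$ fixes the starting site exactly rather than only approximately. For the open boundary with $N=2T+1$, the support of the walker never reaches the edges, so $R$ acts as a genuine unitary on the relevant subspace.

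If the phase bookkeeping becomes cumbersome, I will instead verify the relation on basis states $\ket{c,x}$ directly: apply $U_A$ to $\sigma_y\ket{c}\ket{x}$ and compare with $-(\sigma_y\otimes R)U_A(I\otimes R)\ket{c}\ket{x}$ for $c=0,1$. Establishing this single-step identity for $U_A$ is the key step, and the rest follows by induction on $t$.
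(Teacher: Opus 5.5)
Your proposal is correct and follows the same route as the paper. The paper's proof simply invokes the derivation of Lemma~\ref{lem:Bell_LU_equivalence} together with $\ket{\Phi^-}\propto(\sigma_y\otimes I)\ket{\Psi^+}\propto(I\otimes\sigma_y)\ket{\Psi^+}$, and your explicit relation with global phase $i(-1)^t$ checks out. Your reflection $R\ket{x}=\ket{2\lfloor N/2\rfloor-x}$ (read mod $N$ on the cycle) is a genuine improvement over the paper's $R_{P_A}=\sum_x\ket{N-x}\bra{x}$: it fixes the starting site exactly rather than only ``approximately'', and the shift--reflection intertwining holds for a reflection about any centre.
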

\begin{proof}
    The proof is directly follows from the derivation in Lemma \ref{lem:Bell_LU_equivalence}, and the fact that upto a  global phase
\begin{equation}\label{eq:unitary-equivalent2}
    \ket{\Phi^-} = (\sigma_y \otimes I)\ket{\Psi^+}  = (I \otimes \sigma_y) \ket{\Psi^+},
\end{equation} 
\end{proof}

Based on these results we have the following theorem.  
\begin{theorem}
\label{thm:Bell_entanglement_equivalence}
The time evolution of the bipartite entanglement, under the discrete-time quantum walk, in all possible non-trivial  bipartition is identical for the set $\{\ket{\psi^-}, \ket{\phi^+}\}$ and $\{\ket{\psi^+}, \ket{\phi^-}\}$.
\end{theorem}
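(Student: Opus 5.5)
The plan is to derive Theorem~\ref{thm:Bell_entanglement_equivalence} directly from Lemma~\ref{lem:Bell_LU_equivalence} and Corollary~\ref{coro:symmetric}. We read the statement as saying that, within each class, the two members give identical entanglement dynamics. The key observation is that both results give the two time-evolved states in the same class as related by an operator of the form $e^{i\theta}\,(V_{C_x}\otimes W_{P_x})$, with $x\in\{A,B\}$. Here $V=\sigma_y$ acts on a single coin and $W=R$ acts on the corresponding single position space. Such an operator is a product of single-party local unitaries with respect to the four-party splitting $C_A\otimes P_A\otimes C_B\otimes P_B$. It is therefore local with respect to every bipartition built from these four parties, not just $C_xP_x:\text{rest}$. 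The bipartitions in question are $C_A:C_B$, $P_A:P_B$, $C_A:P_A$, $C_B:P_B$, $C_A:P_B$, $C_B:P_A$, $C_AP_A:C_BP_B$ and $C_AC_B:P_AP_B$, together with any others.

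First, I will check that $R_{P_A}$ is a genuine unitary on $\mathcal H_{P_A}$ and that $R_{P_A}\ket{\lfloor N/2\rfloor}=\ket{\lfloor N/2\rfloor}$. In the open-boundary regime only the support of the walk matters, and choosing $N=2T+1$ odd with reflection $x\mapsto 2\lfloor N/2\rfloor - x$ makes this exact. In the periodic case the same map, taken modulo $N$, is a permutation and hence unitary. This step tidies the ``$\approx$'' in the proof of the lemma, and it is the only place I expect some care to be needed.

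Second, for an arbitrary bipartition $\mathcal X:\mathcal Y$ of the four parties, and possibly after tracing out some parties to obtain a reduced state $\rho_{\mathcal X\mathcal Y}$ as in $E_{\mathcal N}(C_A:C_B)$, I will show the following. The reduced state of $\ket{\widetilde\Psi(t)}$ equals $(L_{\mathcal X}\otimes L_{\mathcal Y})\,\rho\,(L_{\mathcal X}\otimes L_{\mathcal Y})^\dagger$, with $L_{\mathcal X}$ and $L_{\mathcal Y}$ unitary. If the acted-on party has been traced out, its unitary drops out by cyclicity of the partial trace. The global phase $(-1)^t$ cancels in the density matrix.

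Third, I will invoke the invariance of logarithmic negativity under local unitaries. The partial transpose satisfies $\bigl[(L_{\mathcal X}\otimes L_{\mathcal Y})\rho(L_{\mathcal X}\otimes L_{\mathcal Y})^\dagger\bigr]^{T_{\mathcal X}}=(L_{\mathcal X}^{*}\otimes L_{\mathcal Y})\rho^{T_{\mathcal X}}(L_{\mathcal X}^{*}\otimes L_{\mathcal Y})^\dagger$, which is a unitary conjugation. Hence the eigenvalues $\lambda_i$, and so $\sum_i|\lambda_i|$, are unchanged. This gives equality of $E_{\mathcal N}$ at every $t$ and in every bipartition for $\{\ket{\Psi^-},\ket{\Phi^+}\}$. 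Applying the same argument with Corollary~\ref{coro:symmetric} gives the result for $\{\ket{\Psi^+},\ket{\Phi^-}\}$.

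As a remark, the same reasoning shows that Schmidt coefficients across all bipartitions coincide, so the GGM is also equal within each class. Relating the two classes to each other, which is needed for the GGM statement announced later, would require a separate argument and is not part of this theorem.
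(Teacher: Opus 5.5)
Your proposal is correct and follows the paper's proof: both deduce the theorem from Lemma~\ref{lem:Bell_LU_equivalence} and Corollary~\ref{coro:symmetric}, discard the global phase $(-1)^t$, and invoke the invariance of bipartite entanglement under local unitaries. Your added details fill in what the paper leaves implicit. These are the observation that $(\sigma_y)_{C_A}\otimes R_{P_A}$ factorizes over the single parties, so it is local even for cuts such as $C_A:P_A$ or $C_AC_B:P_AP_B$; the explicit handling of reduced states and partial transposes; and centering the reflection exactly at $\lfloor N/2\rfloor$ to remove the ``$\approx$''.
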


\begin{proof}
According to Lemma~\ref{lem:Bell_LU_equivalence}, the two
time-evolved states are related by
\begin{equation}
\ket{\widetilde{\Psi}(t)}_{AB}
=
(-1)^t
\left[
(\sigma_y)_{C_A}\otimes R_{P_A}
\right]
\ket{\Psi(t)}_{AB}.
\end{equation}

Hence, the two states differ only by a local unitary transformation acting on subsystems $C_A$ and $P_A$, while $(-1)^t$ is an irrelevant
global phase. As the bipartite entanglement does not change under local unitary transformation \cite{Horodecki2009}, therefore,  it  takes identical values, across any bipartition, for the initial coin coin states chosen from $\{\ket{\Psi^-},\ket{\Phi^+}\} $. \\
The same construction applies to the second set of Bell-state pair, as evident from Corollary \ref{coro:symmetric}.
Thus, the four Bell states naturally form two local-unitary
equivalence classes,
\begin{equation}
\left\{\ket{\Psi^-},\ket{\Phi^+}\right\},
\qquad
\left\{\ket{\Psi^+},\ket{\Phi^-}\right\},
\end{equation}

within which the corresponding time-evolved states have identical
bipartite entanglement across every bipartition.
\end{proof}

\subsection{Closed-boundary regime ($N<2T$)}
In this section, we forced both walkers to move within the confinement of the lattice sites, where the maximum number of lattice sites satisfies $N < 2T$, so that the walkers can encounter the lattice boundary multiple times within their maximum time steps, $T$. 

\subsubsection{Separable Coin-Coin State} In this subsection, we look into the entanglement dynamics when the initial coin state is separable, i.e., of the form $\ket{\psi}_{C_AC_B} = \ket{\phi}_{C_A}\otimes \ket{\chi}_{C_B}$,
in the closed-boundary regime.\\
\indent As discussed in the previous section, the non-trivial contributions are coming from the $C_AC_B:P_AP_B$ and $C_x:P_x, ~x \in \{A,B\}$ bipartitions. The plot of time evolution of entanglement in those bipartitions has been presented in Fig.~\ref{fig:N3_T50_sep}, as a function of discrete number of steps, for $N=3$, where the initial joint coin state is chosen to be $\ket{\psi_1}_{C_AC_B}=\ket{01}_{C_AC_B}$.\\
\begin{figure}[t]
    \centering
    \subsubsection*{For N=3}
    \label{subsection_N=3}
    \includegraphics[width=\linewidth]{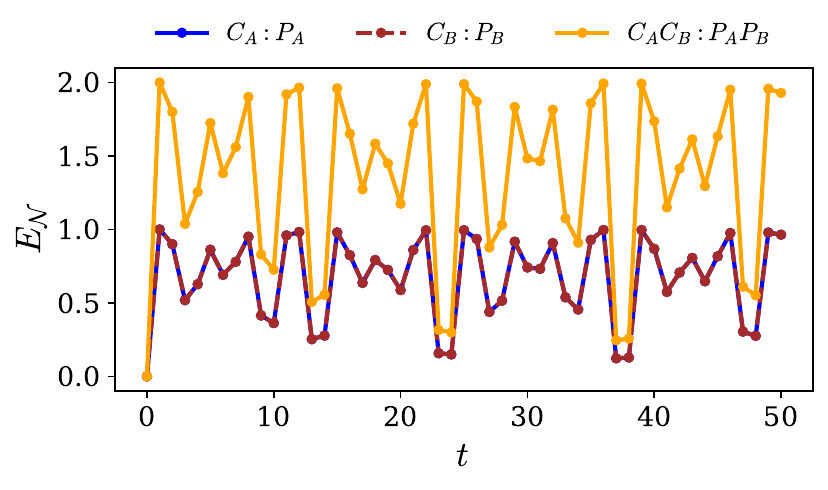}
    \caption{
    (Color online.)
    Time evolution of bipartite entanglement (in ebit unit) in various bipartitions has been plotted as a function of number of steps $t$ ( dimensionless), in the two-walker DTQW, when the initial coin: coin quantum state is a product state $\ket{\psi_{1}}_{C_AC_B}=\ket{01}_{C_AC_B}$. The entanglement has been calculated with the help of logarithmic negativity in the bipartitions $C_A: P_A$ (by the solid blue line) and $C_B: P_B$ (by the dashed brown line) $C_AC_B: P_AP_B$ (by the solid orange curve). Here, we consider the closed-boundary regime with $N=3$ lattice points and a maximum allowed step count of $T=50$.}
    \label{fig:N3_T50_sep}
\end{figure}
Similar to the open-boundary regime, we have found that for this particular initial state:  $E_\mathcal{N}(C_A:P_A) = E_\mathcal{N}(C_B:P_B)$, and the collective coin:position entanglement reduces to $E_\mathcal{N}(C_AC_B: P_AP_B)  = 2E_\mathcal{N}(C_A:P_A)$. The evolution of entanglement exhibits several noteworthy features that can be directly understood from the structure of the two-walker quantum walk.\\
The solid orange curve in Fig.~\ref{fig:N3_T50_sep} corresponds to the collective coin-position entanglement $E_\mathcal{N}(C_AC_B: P_AP_B)$, while the solid blue curve and dashed brown curve represent the local coin-position entanglements $E_\mathcal{N}(C_A: P_A)$ and $E_\mathcal{N}(C_B: P_B)$, respectively.
We observe that these quantities exhibit persistent oscillations throughout the evolution, which are quasi-periodic in nature. The oscillatory behavior originates from the repeated exchange of quantum correlations between the coin and position degrees of freedom through the conditional shift operation. Since the lattice consists of only three sites with periodic boundary conditions, the walkers repeatedly revisit the same lattice sites, leading to strong interference and recurrence effects that sustain the observed oscillations.

We have also investigated the dynamics for another initial coin product state, $\ket{0+}_{C_AC_B}$, where $\ket+=\frac{1}{\sqrt2}(\ket{0}+\ket1)$. The dynamics is mostly similar with a major difference of $E_\mathcal{N}(C_A:P_A)$ and $E_\mathcal{N}(C_B:P_B)$ being inequal as plotted in Fig. \ref{fig:N3_T50_sep_0+}.

\begin{figure}[b]
    \subsubsection*{For N=3}  
    \includegraphics[width=\linewidth]{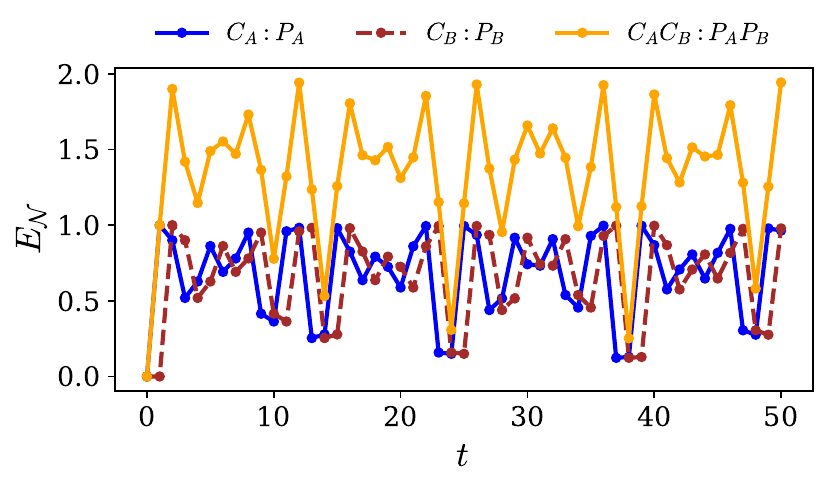}
    \caption{ (Color online.)
    Time evolution of bipartite entanglement (in ebit unit) in various bipartitions has been plotted as a function of number of steps $t$ ( dimensionless), in the two-walker DTQW, when the initial coin: coin quantum state is a product state $\ket{\psi_{2}}_{C_AC_B}=\ket{0+}_{C_AC_B}$. The entanglement has been calculated with the help of logarithmic negativity in the bipartitions $C_A: P_A$ (by the solid blue line) and $C_B: P_B$ (by the dashed brown line) $C_AC_B: P_AP_B$ (by the solid orange curve). Here, we consider the closed-boundary regime with $N=3$ lattice points and a maximum allowed step count of $T=50$. }
    \label{fig:N3_T50_sep_0+}
\end{figure}

The qualitative features observed for $N=4$, given in the Fig. \ref{fig:N4_T50_sep} are largely identical to those discussed previously for $N=3$. In particular, the coin-coin entanglement $E_\mathcal{N}(C_A: C_B)$, the position-position entanglement $E_\mathcal{N}(P_A: P_B)$, and the cross coin-position correlations remain identically zero throughout the evolution.  Similarly, the generalized geometric measure vanishes at all times, reflecting the absence of genuine multipartite entanglement.\\
\indent The principal difference between the $N=3$ and $N=4$ lattices lies in the appearance of an exact periodic structure in the entanglement dynamics for the latter case. As evident from Fig.~\ref{fig:N4_T50_sep}, both the collective coin-position entanglement $E_\mathcal{N}(C_AC_B: P_AP_B)$ and the local coin-position entanglement $E_\mathcal{N}(C_A: P_A)=E_\mathcal{N}(C_B :P_B)$ exhibit perfectly regular oscillations with complete revivals. Unlike the quasi-periodic oscillations observed for $N=3$, the entanglement dynamics for $N=4$ repeats exactly after a fixed number of time steps. The observed periodicity can be understood from the spectral properties of the walk operator. For a discrete-time quantum walk on a cycle, a periodic solution exists if there is an integer $\tau$ such that
\begin{equation}
    U_{A(B)}^\tau = I.
    \label{eq:periodicity_condition}
\end{equation}
\indent For the Hadamard walk on a cycle, Portugal \cite{2462630} has shown that the eigenphases satisfy the condition
\begin{equation}
    \sin\left(\frac{2\pi j_k}{\tau}\right)=\frac{1}{\sqrt{2}}\sin\left(\frac{2\pi k}{N}\right), \hspace{0.1in}\forall \hspace{0.1in}0\le k\le N-1,
    \label{eq:Portugal_condition}
\end{equation}
where $\tau$ denotes the period of the walk. Periodic solutions exist only when Eq.~(\ref{eq:Portugal_condition}) admits integer solutions for all momentum modes, $k$. In particular, for the Hadamard walk on a cycle, the smallest nontrivial periodic lattices are $N=2$ with period $\tau=2$, $N=4$ with period $\tau=8$, and $N=8$ with period $\tau=24$~\cite{2462630}.

\begin{figure}[t]
    \centering
    \subsubsection*{For N=4}
    \includegraphics[width=\linewidth]{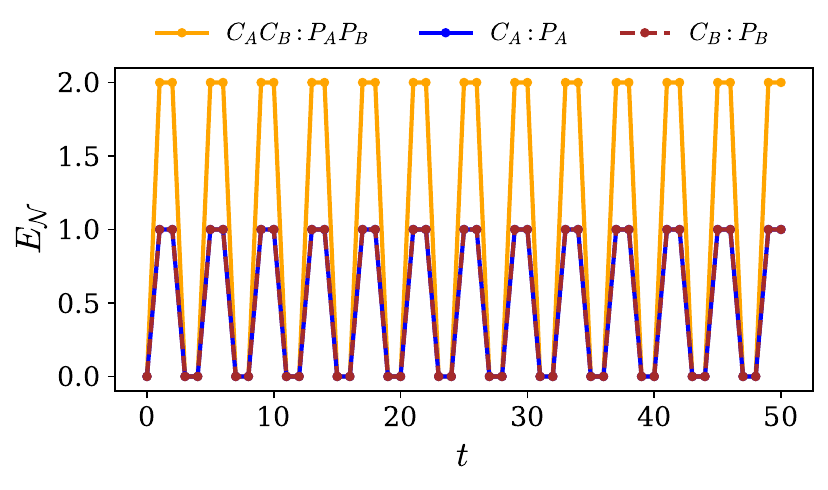}
    \caption{ (Color online.) Time evolution of bipartite entanglement (in ebit unit) in various bipartitions has been plotted as a function of number of steps $t$ ( dimensionless), in the two-walker DTQW, when the initial coin: coin quantum state is a product state $\ket{\psi_{1}}_{C_AC_B}=\ket{01}_{C_AC_B}$. The entanglement has been calculated with the help of logarithmic negativity in the bipartitions $C_A: P_A$ (by the solid blue line) and $C_B: P_B$ (by the dashed brown line) $C_AC_B: P_AP_B$ (by the solid orange curve). Here, we consider the closed-boundary regime with $N=4$ lattice points and a maximum allowed step count of $T=50$.}
    \label{fig:N4_T50_sep}
\end{figure}

\begin{figure}[H]
    \subsubsection*{For N=4}  
    \includegraphics[width=1\linewidth]{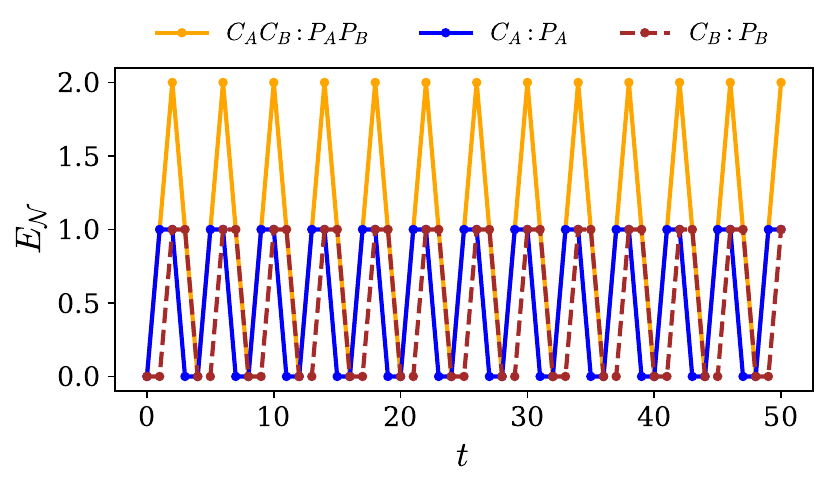}
    \caption{  (Color online.)
    Time evolution of bipartite entanglement (in ebit unit) in various bipartitions has been plotted as a function of number of steps $t$ ( dimensionless), in the two-walker DTQW, when the initial coin: coin quantum state is a product state $\ket{\psi_{2}}_{C_AC_B}=\ket{0+}_{C_AC_B}$. The entanglement has been calculated with the help of logarithmic negativity in the bipartitions $C_A: P_A$ (by the solid blue line) and $C_B: P_B$ (by the dashed brown line) $C_AC_B: P_AP_B$ (by the solid orange curve). Here, we consider the closed-boundary regime with $N=4$ lattice points and a maximum allowed step count of $T=50$.}
    \label{fig:N4_T50_sep_0+}
\end{figure}
For the other initial coin product state, $\ket{0+}_{C_AC_B}$, where $\ket+=\frac{1}{\sqrt2}(\ket{0}+\ket1)$, the dynamics is mostly similar with a major difference of $E_\mathcal{N}(C_A:P_A)$ and $E_\mathcal{N}(C_B:P_B)$ being inequal as plotted in Fig. \ref{fig:N4_T50_sep_0+}.

One important thing to notice is that the periodicity of the entanglement is equal to 4 while the periodicity of the operator is 8 for N=4. As shown analytically in Appendix~\ref{Analytical_N=4}, after every four time steps, the coin state is restored \cite{PhysRevA.95.052338} while the position state is translated by two lattice sites modulo four,
\begin{equation*}
    |0\rangle \leftrightarrow |2\rangle,\qquad|1\rangle \leftrightarrow |3\rangle.
\end{equation*}
Consequently, all entanglement measures exhibit exact revivals with a time period of 4 during the evolution.

Therefore, the exact revivals observed in Fig.~\ref{fig:N4_T50_sep} and Fig.~\ref{fig:N4_T50_sep_0+}  are a direct consequence of the fact that the four-site cycle constitutes a periodic quantum walk. The regular oscillations of the entanglement measures are thus manifestations of the recurrence properties of the underlying unitary evolution rather than the generation of additional quantum correlations. 
\label{subsection_N=4}
\subsubsection{Maximally Entangled Coin-Coin State} Now we have investigated the entanglement dynamics when the initial coin state is  $\ket{\psi_{3}}_{C_AC_B}
=\frac{1}{\sqrt{2}}
\left(
\ket{01}-\ket{10}
\right)_{C_AC_B}$, in the closed-boundary region.
\begin{figure*}[ht]
\subsubsection*{For $N=3$}
\centering
\begin{minipage}[ht]{0.49\linewidth}
    \centering
    \includegraphics[width=0.98\linewidth]{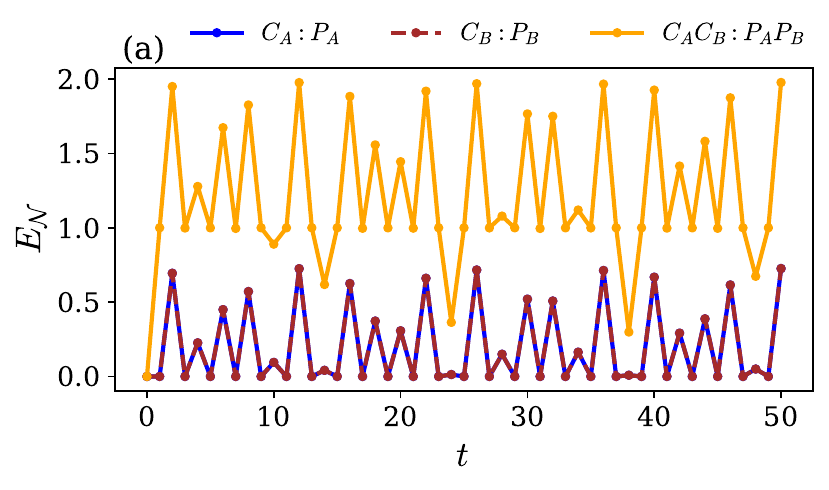}
    \includegraphics[width=0.98\linewidth]{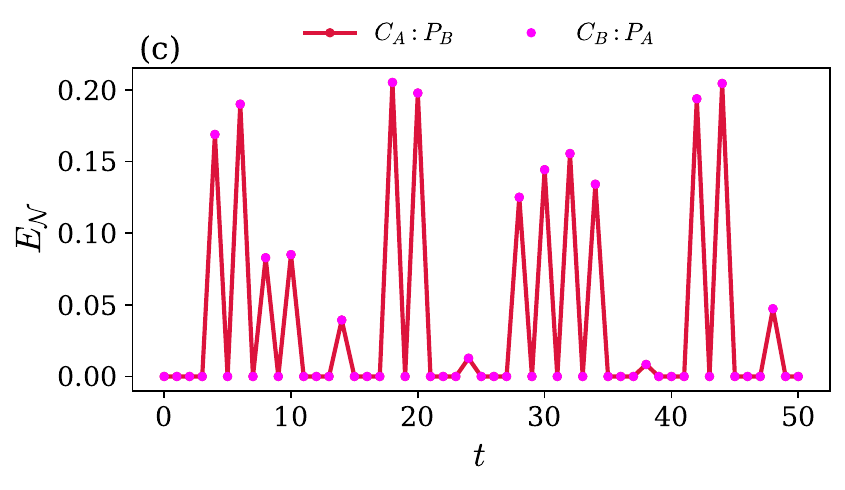}
\end{minipage}
\hfill
\begin{minipage}[ht]{0.49\linewidth}
    \centering
    \includegraphics[width=0.98\linewidth]{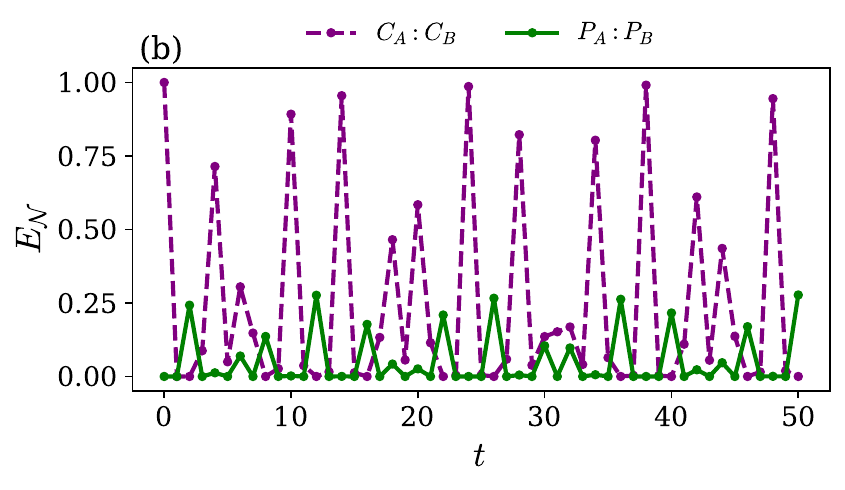}
    \vspace{0.1in}\\
    \includegraphics[width=0.98\linewidth]{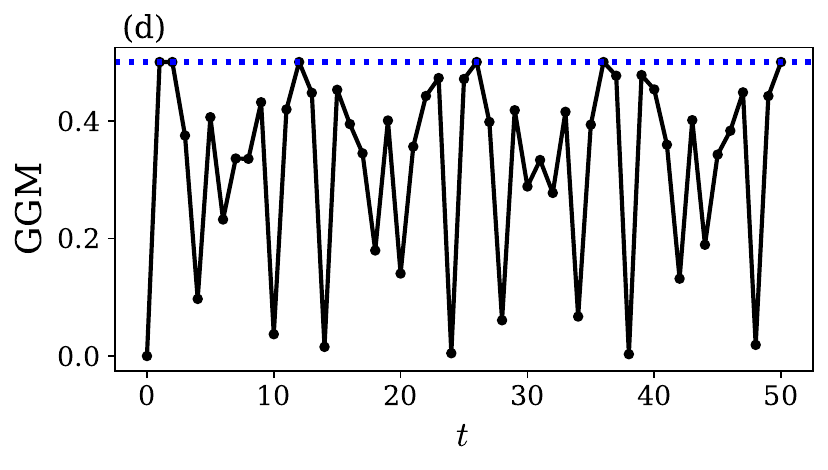}
\end{minipage}
\caption{(Color online.)
    Time evolution of bipartite entanglement (in ebit unit) in various bipartitions and GGM (dimensionless) has been plotted as a function of number of steps $t$ (dimensionless), in the two-walker DTQW, when the initial coin: coin quantum state is $\ket{\psi_{3}}_{C_AC_B}=\frac{1}{\sqrt{2}} \left( \ket{01}-\ket{10} \right)_{C_AC_B}$ The entanglement has been calculated with the help of logarithmic negativity in the bipartitions (a) $C_A:P_A$, $C_B:P_B$ and $C_AC_B:P_AP_B$, and (b) $C_A:C_B$, $P_A:P_B$ and (c) $C_A:P_B$ , $C_B:P_A$ and whereas in (d) the generalized geometric measure (GGM) of $\ket{\psi(t)}_{AB}$ has been plotted. Here, we consider the closed-boundary regime with $N=3$ lattice points and a maximum allowed step count of $T=50$.}
\label{fig:N3_T50_entangled}
\end{figure*}

\begin{figure*}[ht]
\subsubsection*{For $N=4$}
\centering
\begin{minipage}[ht]{0.49\linewidth}
    \centering
    \includegraphics[width=\linewidth]{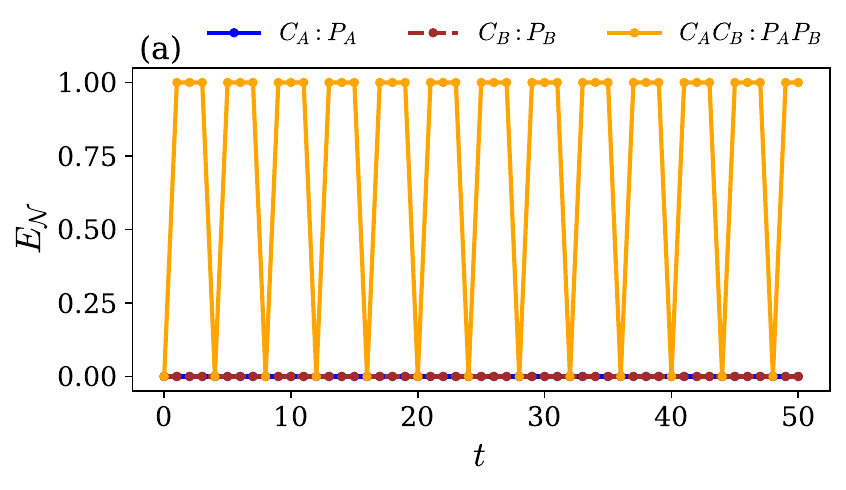}
    \includegraphics[width=\linewidth]{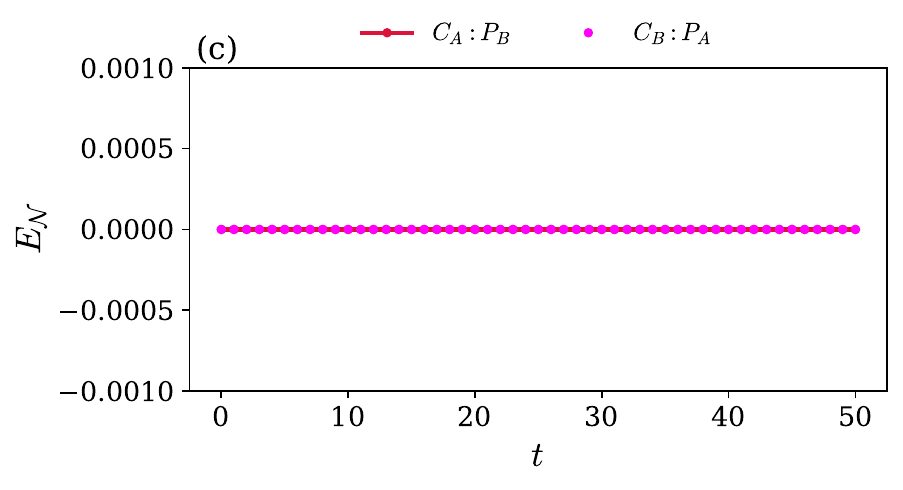}
\end{minipage}
\hfill
\begin{minipage}[ht]{0.49\linewidth}
    \centering
    \includegraphics[width=\linewidth]{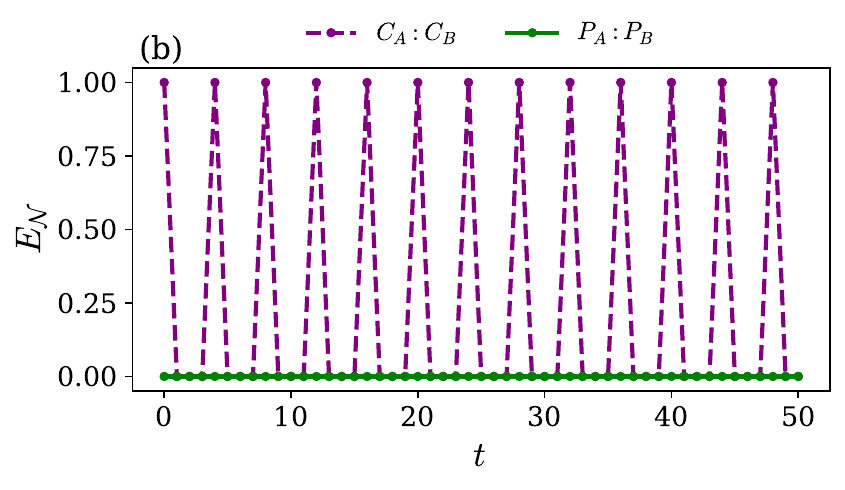}
    \includegraphics[width=\linewidth]{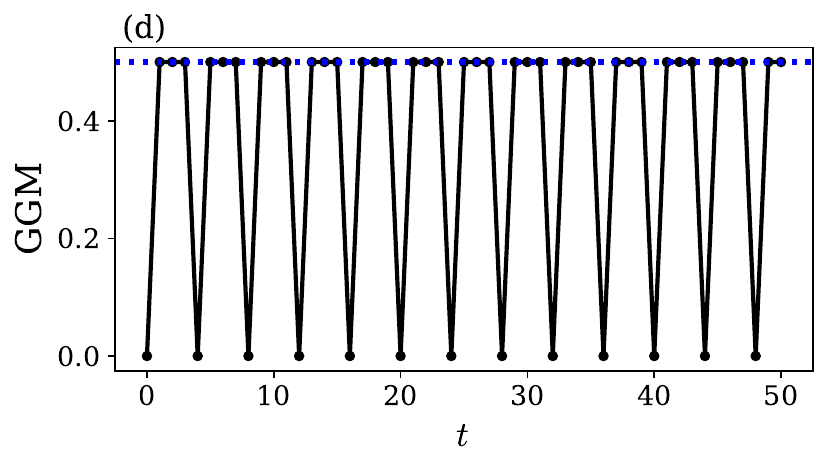}
\end{minipage}
\caption{(Color online)
    Time evolution of bipartite entanglement (in ebit unit) in various bipartitions and GGM (dimensionless) has been plotted as a function of number of steps $t$ (dimensionless), in the two-walker DTQW, when the initial coin: coin quantum state is $\ket{\psi_{3}}_{C_AC_B}=\frac{1}{\sqrt{2}} \left( \ket{01}-\ket{10} \right)_{C_AC_B}$ The entanglement has been calculated with the help of logarithmic negativity in the bipartitions (a) $C_A:P_A$, $C_B:P_B$ and $C_AC_B:P_AP_B$, and (b) $C_A:C_B$, $P_A:P_B$ and (c) $C_A:P_B$ , $C_B:P_A$ and whereas in (d) the generalized geometric measure (GGM) of $\ket{\psi(t)}_{AB}$ has been plotted. Here, we consider the closed-boundary regime with $N=4$ lattice points and a maximum allowed step count of $T=50$.}
\label{fig:N4_T50_entangled}
\end{figure*}
In contrast to the separable initial coin-coin case, the evolution now exhibits nontrivial quantum correlations across all sectors of the system. As shown in Fig.~\ref{fig:N3_T50_entangled}(a), the local coin-position entanglements $E_\mathcal{N}(C_A:P_A)$ and $E_\mathcal{N}(C_B:P_B)$ are generated immediately after the walk begins and exhibit identical oscillatory behavior due to the symmetry of the walk. Simultaneously, the collective coin-position entanglement $E_\mathcal{N}(C_AC_B:P_AP_B)$ develops rapidly and remains significantly larger than all other bipartite entanglement measures throughout the evolution. This behavior indicates that the initially localized coin-coin entanglement is continuously redistributed between the coin and position degrees of freedom by the conditional shift operation.

The dynamics of the coin-coin and position-position entanglement are shown in Fig.~\ref{fig:N3_T50_entangled}(b). The initial coin-coin entanglement is not preserved within the coin subsystem. Instead, $E_\mathcal{N}(C_A:C_B)$ undergoes pronounced oscillations and periodically decreases from its initial value. Furthermore, a small but nonzero position-position entanglement $E_N(P_A:P_B)$ is generated during the evolution, demonstrating that quantum correlations initially stored in the coin sector are partially transferred to the position sector.

The cross coin-position entanglement dynamics shown in Fig.~\ref{fig:N3_T50_entangled}(c) provides further insight into the redistribution process. 
The quantities $E_\mathcal{N}(C_A:P_B)$ and $E_\mathcal{N}(C_B:P_A)$ exhibit identical oscillatory behavior due to the symmetry of both the initial Bell state and the evolution operator under the exchange of the two walkers. Their magnitudes remain substantially smaller than the corresponding local coin-position entanglements at all steps as shown in Fig.~\ref{fig:N3_T50_entangled}(a). The nonzero values of these cross partitions indicate that the initial coin-coin entanglement allows quantum correlations to spread across different subsystems of the composite Hilbert space, although these correlations remain weaker than the local coin-position entanglement generated within each walker.

A particularly striking feature of the dynamics is the behavior of the generalized geometric measure shown in Fig.~\ref{fig:N3_T50_entangled}(d). Unlike the separable initial state case, where the GGM remains identically zero, the present evolution generates substantial genuine multipartite entanglement. The GGM rapidly increases from zero and repeatedly approaches the theoretical upper bound of $1/2$ derived in the Appendix~\ref{Max_GGM}. This demonstrates that the initially bipartite Bell-state entanglement is converted into genuine four-partite entanglement involving the subsystems $C_A$, $P_A$, $C_B$, and $P_B$. The repeated oscillations of the GGM reflect the continuous redistribution of quantum correlations among different bipartitions of the system.\\
\indent It is important to note that the evolution operator still factorizes as Eq.~(\ref{eq:U_AB}). Therefore, no new inter-walker interaction is introduced during the dynamics. The emergence of genuine multipartite entanglement is entirely due to the presence of initial coin-coin entanglement. Local unitary evolution converts this pre-existing nonlocal resource into multipartite correlations by coupling each coin degree of freedom to its corresponding position space. Consequently, the quantum walk acts as an efficient mechanism for transforming bipartite Bell-state entanglement into multipartite entanglement distributed across the entire four-partite system.\\
\indent Overall, the results demonstrate that the nature of the initial coin state plays a crucial role in determining the entanglement structure of the evolved state. While a separable initial state generates only local coin-position entanglement, an initially entangled Bell state leads to the formation of substantial genuine multipartite entanglement together with a rich redistribution of quantum correlations among the various bipartitions of the system.

The entanglement dynamics for the four-site cycle shown in Fig.~\ref{fig:N4_T50_entangled} exhibits a remarkably regular periodic structure, in sharp contrast to the quasi-periodic oscillations observed for $N=3$. The periodicity originates from the recurrence properties of the underlying quantum walk operator on the four-site cycle and is reflected in all entanglement measures as discussed in subsection~\ref{subsection_N=4}.\\
\indent As shown in Fig.~\ref{fig:N4_T50_entangled}(a), the collective coin-position entanglement $E_\mathcal{N}(C_AC_B:P_AP_B)$ oscillates periodically between zero and one. In contrast, the local coin-position entanglements $E_\mathcal{N}(C_A:P_A)$ and $E_\mathcal{N}(C_B:P_B)$ remain identically zero throughout the evolution. Thus, unlike the $N=3$ case, no local coin-position entanglement is generated. Instead, the dynamics consists entirely of a periodic redistribution of the initially present Bell-state entanglement between the coin subsystem and the global coin-position partition.\\
\indent The complementary behavior of the coin-coin and position-position entanglement is shown in Fig.~\ref{fig:N4_T50_entangled}(b). The coin-coin entanglement $E_\mathcal{N}(C_A:C_B)$ oscillates periodically between zero and one. Whenever the coin-coin entanglement attains its maximum value, the collective coin-position entanglement shown in Fig.~\ref{fig:N4_T50_entangled}(a) vanishes, and vice versa. This complementary behavior demonstrates a coherent transfer of quantum correlations between the coin subsystem and the collective coin-position bipartition. Throughout the evolution, the position-position entanglement $E_\mathcal{N}(P_A:P_B)$ remains identically zero, indicating that the quantum correlations never become localized exclusively within the position degrees of freedom.

An important feature of the dynamics is shown in Fig.~\ref{fig:N4_T50_entangled}(c), where the cross coin-position entanglement measures $E_\mathcal{N}(C_A:P_B)$ and $E_\mathcal{N}(C_B:P_A)$ also vanish identically for all times. Consequently, neither local nor nonlocal coin-position bipartite entanglement is generated during the evolution. In particular, we observe that, for $t = 1,2,$ and $3 \pmod{4}$, all pairwise bipartite entanglements between any two subsystems vanish. The only nonzero logarithmic negativity is that across the collective bipartition $(C_A C_B):(P_A P_B)$, while the GGM simultaneously attains its maximum value of $0.5$. This indicates that the generated entanglement is entirely genuine multipartite in nature and exhibits the characteristic entanglement structure of a four-qubit GHZ state~\cite{greenberger2007goingbellstheorem}, whose two-party reduced states are separable despite possessing maximal genuine multipartite entanglement. This behavior persists periodically and is interrupted only at $t = 4n$, where $n \in \mathbb{N}$. At these time steps, the system returns to a product state between the collective coin and position degrees of freedom, the GGM vanishes, and the initial coin--coin entanglement is completely restored. As evident from Eqs.~\eqref{eq:Bpsi1}--\eqref{eq:Bpsi3}, although each position subsystem belongs to an $N$-dimensional Hilbert space, only two position basis states are populated for $t \neq 4n$. Consequently, the dynamics is effectively confined to a four-qubit Hilbert space, giving rise to a GHZ-like entanglement structure.

As shown in Fig.~\ref{fig:N4_T50_entangled}(d), the GGM oscillates periodically between zero and its theoretical upper bound of $1/2$, derived in Appendix~\ref{Max_GGM}. Consequently, the system repeatedly evolves into states possessing the maximum possible genuine multipartite entanglement permitted by the Hilbert-space structure. This observation is fully consistent with the analytical expressions in Eqs.~\eqref{eq:Bpsi1}--\eqref{eq:Bpsi3}, which show that, for $t=1,2,$ and $3 \pmod{4}$, the dynamics is confined to an effective four-qubit subspace. The resulting states exhibit the same entanglement structure as a four-qubit GHZ state, characterized by maximal genuine multipartite entanglement and vanishing pairwise bipartite entanglement.

The attainment of GGM $=1/2$ demonstrates an efficient conversion of the initial Bell-state entanglement into genuine four-partite entanglement involving the subsystems $C_A$, $P_A$, $C_B$, and $P_B$.

These results highlight a qualitative difference between the $N=3$ and $N=4$ lattices. While the three-site cycle distributes the initial Bell-state entanglement among several bipartitions, generating both local coin-position and multipartite correlations, the four-site cycle exhibits a much cleaner dynamics characterized by a periodic exchange between coin-coin entanglement and genuine multipartite entanglement. The system therefore alternates between states dominated by bipartite Bell-type correlations and states exhibiting maximal multipartite entanglement.\\
\indent Overall, the $N=4$ lattice provides an example of a perfectly periodic entanglement-transfer mechanism, where the initially localized Bell-state entanglement is periodically transformed into maximum genuine multipartite entanglement and subsequently restored, without generating any position-position, local coin-position, or cross coin-position entanglement.

\section{Robustness of the Genuine Multipartite Entanglement}
In section \ref{GGM_gen}, we demonstrated that the open-boundary regime provides an efficient mechanism for generating genuine multipartite entanglement, with the generalized geometric measure (GGM) rapidly approaching its maximum attainable value. An important question, however, is whether this behavior is a special feature of a particular choice of initial conditions and quantum-walk parameters, or whether it is a generic property of the dynamics.\\
In this section, we systematically investigate the robustness of the generated genuine multipartite entanglement in the open-boundary regime. Specifically, we examine the sensitivity of the GGM to variations in both the initial joint coin state and the local coin operator. We first analyze the dependence of the multipartite entanglement on the choice of the four Bell states used as the initial coin state. Subsequently, we investigate the robustness of the GGM against continuous deviations from the Bell state and the Hadamard coin by considering one-parameter families of the initial coin state and the local coin operator. This analysis enables us to identify the essential ingredients responsible for the robust generation of near-maximal genuine multipartite entanglement in two-walker discrete-time quantum walks.


\subsection{Dependence on the Bell states}
To investigate the robustness of the maximum genuine multipartite entanglement generated during the quantum walk in the open-boundary region, we study the time evolution of the generalized geometric measure (GGM) for all four Bell states used as the initial coin state. Lemma~\ref{lem:Bell_LU_equivalence} establishes that the time-evolved
states within each Bell-state equivalence class are related by local
unitary transformations and therefore possess identical Schmidt
spectra across every bipartition. Since the GGM is invariant under
local unitary transformations, the GGM dynamics is identical within
each equivalence class. Remarkably, as shown in
Fig.~\ref{fig:bell_ggm}, the two equivalence classes also exhibit
identical GGM dynamics, resulting in the same GGM variation for all
four Bell states.

\begin{figure}[ht]
    \centering
    \textbf{(a)}\hfill \textbf{(b)}\\[1mm]
    \includegraphics[width=0.49\columnwidth]{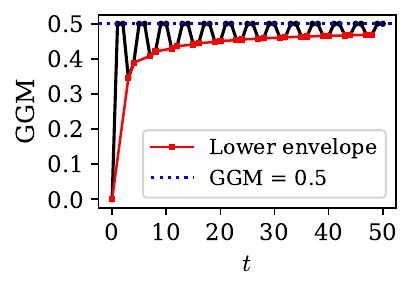}
    \hfill
    \includegraphics[width=0.49\columnwidth]{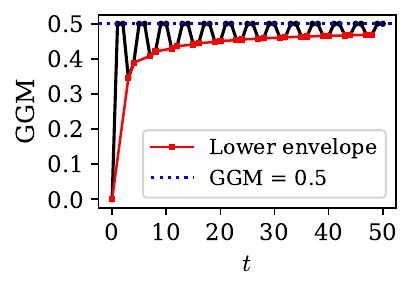}
    \textbf{(c)}\hfill \textbf{(d)}\\[1mm]
    \includegraphics[width=0.49\columnwidth]{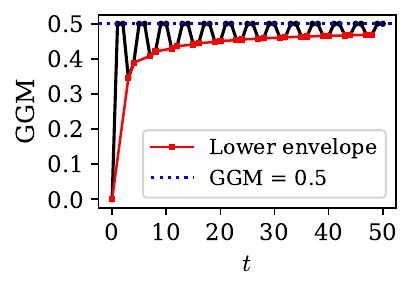}
    \hfill
    \includegraphics[width=0.49\columnwidth]{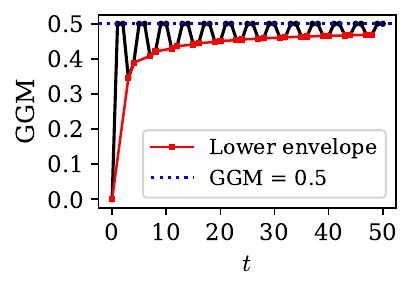}
    \caption{ (Color online)
    Time evolution of the generalized geometric measure (GGM) for the four Bell states used as the initial coin state:
    (a) $\ket{\Phi^{+}}=(\ket{00}+\ket{11})/\sqrt{2}$,
    (b) $\ket{\Phi^{-}}=(\ket{00}-\ket{11})/\sqrt{2}$,
    (c) $\ket{\Psi^{+}}=(\ket{01}+\ket{10})/\sqrt{2}$, and
    (d) $\ket{\Psi^{-}}=(\ket{01}-\ket{10})/\sqrt{2}$.
    The lattice size is fixed at $N=101$ ($N>2T$), corresponding to the open-boundary regime.}
    \label{fig:bell_ggm}
\end{figure}

Fig.~\ref{fig:bell_ggm} shows the variation of the GGM as a function of the number of time steps for the Bell states $\ket{\Phi^{+}}$, $\ket{\Phi^{-}}$, $\ket{\Psi^{+}}$, and $\ket{\Psi^{-}}$. We observe that all four Bell states exhibit identical GGM dynamics in the open-boundary regime. This demonstrates that the generation of genuine multipartite entanglement is insensitive to the particular choice of maximally entangled initial coin state. In particular, although the detailed bipartite entanglement dynamics can differ between the two Bell-state equivalence classes, the GGM remains unchanged. Thus, the robustness of the generated genuine multipartite entanglement is not tied to a specific Bell-state preparation, but instead reflects a more general feature of the two-walker quantum-walk dynamics.
\subsection{Robustness against Deviations from the Initial Bell State and the Coin Operator}

Having established that the generated genuine multipartite entanglement is identical for all four Bell states, we now investigate its robustness against perturbations in both the initial coin state and the local coin operator. To this end, we consider the one-parameter family of Bell-like states
given in Eq.~(\ref{eq:psi_eps}) and the one-parameter family of coin operators defined in Eq.~(\ref{eq:coin_eps}), parameterized by $\epsilon_1$ and $\epsilon_2$, respectively. Here, $\epsilon_1$ continuously interpolates between the maximally entangled Bell state and separable product states, whereas $\epsilon_2$ continuously deforms the Hadamard coin into other members of the one-parameter coin family.

To systematically examine the robustness of the generated genuine multipartite entanglement, we consider the following three cases:

\begin{enumerate}
    \item[(i)] $\epsilon_1\neq0$ and $\epsilon_2=0$, where only the initial coin state is varied while the Hadamard coin is kept unchanged.

    \item[(ii)] $\epsilon_1=0$ and $\epsilon_2\neq0$, where the initial Bell state is fixed and only the local coin operator is varied.

    \item[(iii)] $\epsilon_1\neq0$ and $\epsilon_2\neq0$, where both the initial coin state and the coin operator are simultaneously varied.
\end{enumerate}

These three cases allow us to distinguish the individual and combined effects of the initial-state entanglement and the quantum-walk dynamics on the generation of genuine multipartite entanglement.

\subsubsection*{Case I: Variation of the Initial Coin State ($\epsilon_1\neq0,\;\epsilon_2=0$)}

We first investigate the GGM for the following one-parameter family of Bell-like states

\begin{equation}
|\psi(\epsilon_1)\rangle_{C_AC_B}
=
\Big[
\sqrt{\frac{1+\epsilon_1}{2}}\,|01\rangle
-
\sqrt{\frac{1-\epsilon_1}{2}}\,|10\rangle
\Big]_{C_AC_B},
\label{eq:psi_eps}
\end{equation}

where $\epsilon_1\in[-1,1]$. The end points $\epsilon_1=\pm1$, correspond to the separable product states $|01\rangle$ and $|10\rangle$, whereas the centre $\epsilon_1=0$, retrieves the Bell-singlet state.
\begin{figure}[b]
    \centering
    \includegraphics[width=1.0\linewidth]{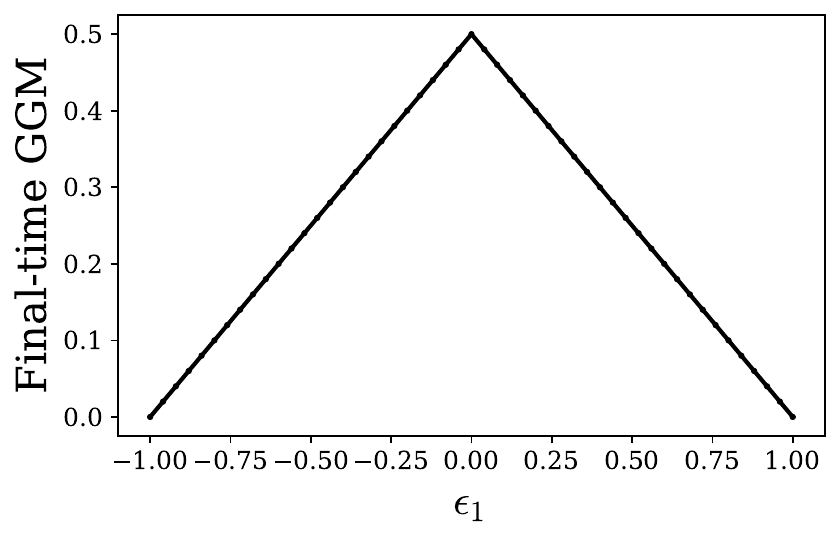}
    \caption{Final-time (at $t=50$) GGM as a function of the initial-state parameter $\epsilon_1$ for a two-walker DTQW in the open-boundary regime ($N=101$, $T=50$). The initial coin state is chosen from the family of Bell-like states given in Eq.~(\ref{eq:psi_eps}). The maximum GGM is obtained for the Bell-singlet state ($\epsilon_1=0$), while the GGM vanishes for the separable states ($\epsilon_1=\pm1$).}
    \label{fig:eps1_variation}
\end{figure}
The final-time GGM has been plotted in Fig.~\ref{fig:eps1_variation}, as a function of $\epsilon_1$. The generated multipartite entanglement exhibits a symmetric dependence about $\epsilon_1=0$, attaining its maximum value,
$GGM_{\rm max}=1/2$ (see Appendix~\ref{Max_GGM}),
for the Bell-singlet state and decreasing monotonically as the initial state approaches a product state. In fact, we have found that the numerical results satisfy

\begin{equation}
GGM_{\rm final}
=
0.5 (1-|\epsilon_1|).
\end{equation}

This demonstrates that the amount of genuine multipartite entanglement generated by the quantum walk is determined primarily by the entanglement initially present in the coin subsystem. The Bell-singlet therefore serves as the optimal initial resource for producing maximal multipartite entanglement.

\subsubsection*{Case II: Variation of the Coin Operator ($\epsilon_1=0,\;\epsilon_2\neq0$)}
\label{subsection5c}
We next examine the robustness of multipartite entanglement against changes in the local coin operation while fixing the initial state to the Bell-singlet. The local coin operator is chosen from the one-parameter family \cite{PhysRevA.77.032326}

\begin{equation}
O_{C_A}(\epsilon_2)
=
\begin{pmatrix}
\cos\!\left(\frac{\pi}{4}+\epsilon_2\right) &
\sin\!\left(\frac{\pi}{4}+\epsilon_2\right) \\
\sin\!\left(\frac{\pi}{4}+\epsilon_2\right) &
-\cos\!\left(\frac{\pi}{4}+\epsilon_2\right)
\end{pmatrix}=O_{C_B}(\epsilon_2)
\label{eq:coin_eps},
\end{equation}

where $\epsilon_2=0$ corresponds to the Hadamard coin, while $\epsilon_2=\pm\pi/4$ recover the Pauli-$X$ and Pauli-$Z$ coins.
\begin{figure}[b]
    \centering
    \includegraphics[width=1.0\linewidth]{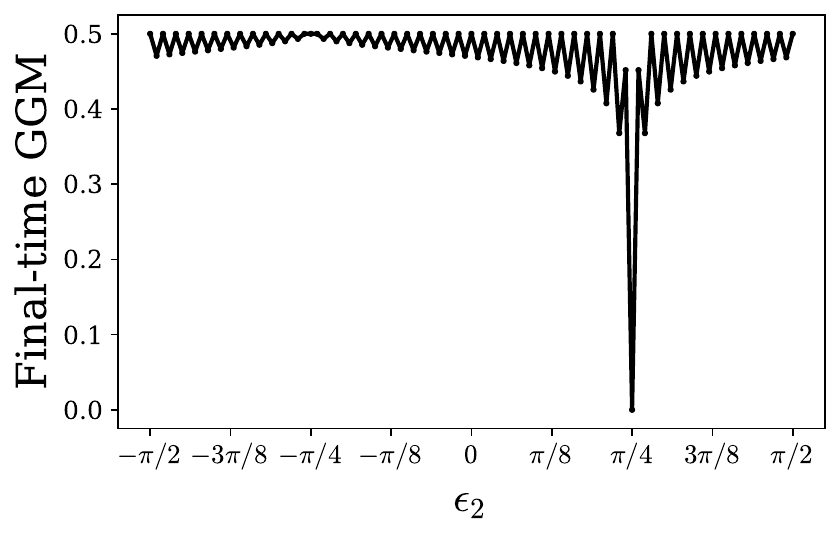}
    \caption{Final-time (at $t=50$) GGM as a function of the coin-operator parameter $\epsilon_2$ for a two-walker DTQW in the open-boundary regime ($N=101$, $T=50$). The walkers are initialized in the Bell-singlet state and evolved using the coin operator defined in Eq.~(\ref{eq:coin_eps}). The GGM remains close to its maximum value over most of the parameter range, with a pronounced dip near $\epsilon_2=\pi/4$.}
    \label{fig:eps2_variation}
\end{figure}
The corresponding final-time GGM is shown in Fig.~\ref{fig:eps2_variation}. We observe that the multipartite entanglement remains remarkably insensitive to the choice of coin operator over almost the entire parameter range. The GGM stays close to its theoretical maximum value of $1/2$, except within a narrow neighbourhood of $\epsilon_2=\pi/4$, where the Pauli-$X$ coin produces a pronounced suppression of multipartite entanglement (see Appendix~\ref{app:Xcoin}). This dip is a stroboscopic effect arising from evaluating the system at an even time step ($T=50$), where the Pauli-X coin forces the GGM to vanish while strictly preserving maximal GGM ($0.5$) at odd steps. Away from this isolated region, the GGM rapidly recovers to values very close to $1/2$. These results indicate that the generation of genuine multipartite entanglement is highly robust against imperfections in the local coin operation.

\subsubsection*{Case III: Simultaneous Variation of the Initial Coin State and Coin Operator ($\epsilon_1\neq0,\;\epsilon_2\neq0$)}

Finally, we investigate the combined influence of the initial coin state and the coin operator on the generated multipartite entanglement. Figure~\ref{fig:eps1_eps2} presents the final-time GGM over the two-dimensional parameter space $(\epsilon_1,\epsilon_2)$.
\begin{figure}[b]
\centering
\includegraphics[width=1\linewidth]{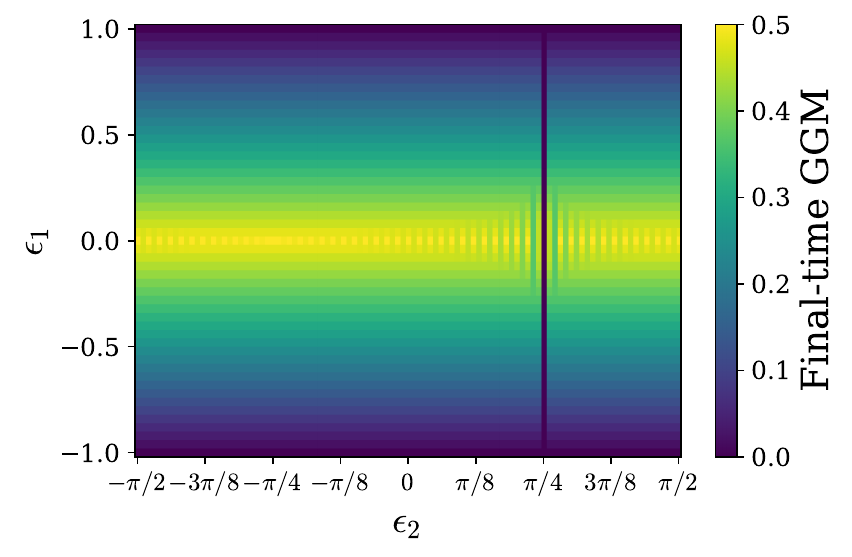}
\caption{ (Color online) Final-time (at $t=50$) GGM as a function of the initial-state parameter $\epsilon_1$ and the coin-operator parameter $\epsilon_2$ for a two-walker DTQW with $N=101$ and $T=50$. The initial coin state is given by Eq.~(\ref{eq:psi_eps}), while the evolution is generated using the coin operator of Eq.~(\ref{eq:coin_eps}).}
\label{fig:eps1_eps2}
\end{figure}

The two-dimensional landscape unifies the observations of the previous two cases. The vertical slice at $\epsilon_2=0$ reproduces the dependence on the initial coin state shown in Fig.~\ref{fig:eps1_variation}, whereas the horizontal slice at $\epsilon_1=0$ recovers the coin-operator dependence of Fig.~\ref{fig:eps2_variation}. Thus, the two one-parameter analyses correspond to special cross-sections of the complete parameter space.\\
\indent A prominent feature of Fig.~\ref{fig:eps1_eps2} is that the GGM varies predominantly along the $\epsilon_1$ direction. For nearly all values of $\epsilon_2$, the multipartite entanglement follows the same symmetric dependence on $\epsilon_1$, reaching its maximum at $\epsilon_1=0$ and decreasing continuously towards zero as $|\epsilon_1|$ approaches unity. In contrast, variations along the $\epsilon_2$ direction produce only minor changes, except for the narrow vertical strip around $\epsilon_2=\pi/4$, where the Pauli-$X$ coin suppresses the multipartite entanglement irrespective of the initial state.\\
\indent The combined parameter-space analysis therefore clearly separates the roles of the two control parameters. The initial-state parameter $\epsilon_1$ determines the attainable amount of genuine multipartite entanglement, whereas the coin-operator parameter $\epsilon_2$ only weakly influences its generation except in the vicinity of the Pauli-$X$ coin.

We can conclude that the maximal multipartite entanglement generation is not a fine-tuned feature of a specific Bell state or a particular choice of coin operator. It is a robust and generic feature of the open-boundary quantum walk, requiring only nonzero initial coin entanglement and remaining largely insensitive to the choice of the coin operator.

\section{Discussion and Conclusion}

We have investigated the generation and evolution of quantum correlations in a two-walker discrete-time quantum walk by treating the two coin and two position degrees of freedom as a four-partite quantum system. Employing logarithmic negativity and the generalized geometric measure (GGM), we have systematically characterized both bipartite and genuine multipartite entanglement generated during the quantum walk. Unlike previous studies that primarily focused on selected bipartitions or coin-position entanglement, the present work provides a unified picture of the redistribution of quantum correlations across all physically relevant subsystems.\\
\indent A central result of this work is the distinct role played by the lattice topology. In the open-boundary regime ($N>2T$), the walkers never encounter the lattice boundaries and the entanglement dynamics is governed solely by the coherent spreading of the wave packets. In this regime, the bipartite entanglement exhibits a well-defined and monotonic redistribution among different constituent subsystems, while the genuine multipartite entanglement approaches its theoretical upper bound, which is  $1/2$. In contrast, the closed-boundary regime ($N<2T$) introduces repeated interference arising from the periodic boundary conditions. The resulting multiple encounters of the wave packets lead to qualitatively different bipartite-entanglement dynamics, characterized by oscillatory behavior and repeated redistribution of quantum correlations among the various subsystem partitions. These observations demonstrate that the lattice topology provides an effective mechanism for controlling the flow of quantum correlations in discrete-time quantum walks.\\
\indent The present analysis also reveals a clear distinction between bipartite and multipartite entanglement. While the logarithmic negativities corresponding to different bipartitions exhibit strong sensitivity to the boundary conditions and undergo substantial temporal redistribution, the genuine multipartite entanglement displays remarkable robustness. In the open-boundary regime, the GGM rapidly approaches its maximum value and remains insensitive to the particular Bell state chosen as the initial coin state. Furthermore, by continuously varying both the initial coin state and the local coin operator, we have demonstrated that maximal multipartite entanglement generation is not a fine-tuned property of a specific dynamical protocol. Instead, it emerges as a generic feature of the two-walker quantum walk provided that a finite amount of initial coin entanglement is present. Except for a narrow neighborhood of the Pauli-$X$ coin, the multipartite entanglement remains close to its theoretical upper bound throughout the accessible parameter space.\\
\indent From a broader perspective, these results indicate that two-walker discrete-time quantum walks constitute an efficient mechanism for converting initially localized bipartite quantum correlations into robust genuine multipartite entanglement involving both the internal and spatial degrees of freedom. The contrasting behavior of bipartite and multipartite entanglement further highlights that multipartite quantum correlations cannot, in general, be inferred from individual bipartite measures alone, emphasizing the importance of simultaneously characterizing both types of entanglement in quantum-walk dynamics.\\
\indent The framework developed here can be extended in several directions. It would be interesting to investigate the influence of decoherence\cite{Kendon2004,Kendon2007,Alberti_2014}, interactions between walkers\cite{Ahlbrecht_2012,tg6f-v65l}, higher-dimensional lattices\cite{doi:10.1126/science.1218448,Zhou2024MultiParticleQW}, topological quantum walks\cite{PhysRevA.82.033429,Kitagawa2012,PhysRevB.92.045424}, and many-walker systems\cite{doi:10.1126/science.1229957,Zhou2024MultiParticleQW} on the generation and redistribution of multipartite entanglement. Such studies may provide further insight into the role of quantum walks as controllable platforms for engineering multipartite quantum correlations and for developing quantum information-processing protocols based on coherent quantum transport.

\begin{acknowledgments} 
S.H. acknowledges the financial support from the Council of Scientific and Industrial Research (CSIR), India, through the award of Junior Research Fellowship (JRF).
\end{acknowledgments}

\appendix
\section{Maximum Possible Value of the Generalized Geometric Measure}
\label{Max_GGM}
In this appendix, we derive the maximum possible value of the generalized geometric measure (GGM) for the four-partite quantum system considered in the present work. The total Hilbert space is
\begin{equation}
\mathcal H=\mathcal H_{C_A}\otimes\mathcal H_{P_A}\otimes\mathcal H_{C_B}\otimes\mathcal H_{P_B},
\end{equation}
where the coin spaces are two-dimensional and the position spaces are $N$-dimensional.\\
For an arbitrary pure state $\ket{\Psi}$, the GGM is defined as
\begin{equation}
GGM(\ket{\Psi})=1-\max_{\mathcal A:\mathcal B}\lambda_{\mathcal A:\mathcal B}^{2},
\label{eq:GGM_app}
\end{equation}
where $\lambda_{\mathcal A:\mathcal B}$ denotes the largest Schmidt coefficient corresponding to the bipartition $\mathcal A:\mathcal B$.
For any bipartition with Schmidt rank $r$, the Schmidt coefficients satisfy
\begin{equation}
\sum_{i=1}^{r}\lambda_i^2=1.
\end{equation}
Consequently, the largest Schmidt coefficient obeys
\begin{equation}
\lambda_{\max}^2 \ge \frac{1}{r}.
\end{equation}
\\Consider the bipartition
\begin{equation}
C_A:(P_AC_BP_B).
\end{equation}
Since $\dim(\mathcal H_{C_A})=2$, the Schmidt rank across this bipartition satisfies $r\le2$, implying
\begin{equation}
\lambda_{\max}^2\ge\frac12.
\end{equation}
\\Because Eq.~(\ref{eq:GGM_app}) involves the maximum Schmidt coefficient over all possible bipartitions, we necessarily have
\begin{equation}
\max_{\mathcal A:\mathcal B}
\lambda_{\mathcal A:\mathcal B}^{2}
\ge
\frac12.
\end{equation}
Therefore,
\begin{equation}
GGM\le
1-\frac12
=
\frac12.
\end{equation}
Hence the generalized geometric measure of the two-walker quantum-walk state is bounded above by
\begin{equation}
\boxed{
GGM_{\max}
=
\frac12 }.
\end{equation}

\section{Analytical Derivation of the Periodic Dynamics for $N=4$}
\label{Analytical_N=4}

In this appendix, we analytically investigate the evolution of the two-walker discrete-time quantum walk on a four-site cycle with periodic boundary conditions. We demonstrate that the periodic behavior observed in the main text is independent of the choice of the initial coin state and arises from the structure of the walk operator itself.

\subsection{Separable initial coin state}

We first consider the initial state

\begin{equation}
|\Psi(0)\rangle_{AB}
=
|01\rangle_{C_AC_B}
\otimes
|22\rangle_{P_AP_B}.
\label{eq:Bsep0}
\end{equation}

Successive applications of the evolution operator

\begin{equation}
U_{AB}
=
S_{AB}
\Big[
(H\otimes H)
\otimes
(I_{P_A}\otimes I_{P_B})
\Big]
\end{equation}

yield

\begin{align}
|\Psi(1)\rangle_{AB}
&=
\frac{1}{2}
\Big(
|0033\rangle
-|0131\rangle\nonumber\\
&\qquad\qquad
+
|1013\rangle
-|1111\rangle
\Big)_{C_AC_BP_AP_B}
\label{eq:Bsep1}\\
|\Psi(2)\rangle_{AB}
&=
\frac{1}{2}
\Big(
|{-}{+}00\rangle
-|{-}{-}02\rangle\nonumber\\
&\qquad\qquad
+|{+}{+}20\rangle
-|{+}{-}22\rangle
\Big)_{C_AC_BP_AP_B},\\
|\Psi(3)\rangle_{AB}
&=
|{+}{-}31\rangle_{C_AC_BP_AP_B},\\
|\Psi(4)\rangle_{AB}
&=
|01\rangle_{C_AC_B}
\otimes
|00\rangle_{P_AP_B},
\label{eq:Bsep4}\\
|\Psi(5)\rangle_{AB}
&=
\frac{1}{2}
\Big(
|0011\rangle
-|0113\rangle\nonumber\\
&\qquad\qquad
+|1031\rangle
-|1133\rangle
\Big)_{C_AC_BP_AP_B}.
\label{eq:Bsep5}
\end{align}
where $\ket+=(\ket0+\ket1)/\sqrt2$ and $\ket-=(\ket0-\ket1)/\sqrt2$.\\
\indent Comparing Eq.~(\ref{eq:Bsep0}) with (\ref{eq:Bsep4}) and  Eq.~(\ref{eq:Bsep1}) with (\ref{eq:Bsep5}), we observe that the coin state is exactly restored after four time steps. From Eq.~(\ref{eq:Bsep0}) and (\ref{eq:Bsep4}), we can see that, 
\begin{equation}
|01\rangle_{C_AC_B}
\longrightarrow
|01\rangle_{C_AC_B},
\end{equation}

while the position state undergoes a translation by two lattice sites,

\begin{equation}
|22\rangle_{P_AP_B}
\longrightarrow
|00\rangle_{P_AP_B}.
\end{equation}

\subsection{Bell-singlet initial coin state}

We next consider the Bell-singlet initial state

\begin{equation}
|\Psi(0)\rangle_{AB}
=
|\psi_3\rangle_{C_AC_B}
\otimes
|22\rangle_{P_AP_B},
\label{eq:Bpsi0}
\end{equation}

where

\begin{equation}
|\psi_3\rangle_{C_AC_B}
=
\frac{1}{\sqrt{2}}
\left(
|01\rangle-|10\rangle
\right)_{C_AC_B}.
\end{equation}

The evolution proceeds as

\begin{align}
|\Psi(1)\rangle_{AB}
&=
\frac{1}{\sqrt2}
\left(
-|0131\rangle+|1013\rangle
\right)_{C_AC_BP_AP_B}
\label{eq:Bpsi1},\\
|\Psi(2)\rangle_{AB}
&=
\frac{1}{2}
\Big[
|\phi^{+}\rangle_{C_AC_B}
(-|02\rangle+|20\rangle)_{P_AP_B}
\nonumber\\
&\qquad\qquad
+
|\psi^{-}\rangle_{C_AC_B}
(|00\rangle+|22\rangle)_{P_AP_B}
\Big],\\
|\Psi(3)\rangle_{AB}
&=
\frac{1}{\sqrt2}
\left(
-|{-}{+}13\rangle
+
|{+}{-}31\rangle
\right)_{C_AC_BP_AP_B},
\label{eq:Bpsi3}\\
|\Psi(4)\rangle_{AB}
&=
|\psi^{-}\rangle_{C_AC_B}
\otimes
|00\rangle_{P_AP_B},
\label{eq:Bpsi4}\\
|\Psi(5)\rangle_{AB}
&=\frac{1}{\sqrt2}
\left(
-|0113\rangle+|1031\rangle
\right)_{C_AC_BP_AP_B}.
\label{eq:Bpsi5}
\end{align}
\\where the state $\ket{\Phi^{+}}=(\ket{00}+\ket{11})/\sqrt{2}$ and $\ket{\Psi^{-}}=(\ket{01}-\ket{10})/\sqrt{2}$ are the usual Bell states while $\ket+=(\ket0+\ket1)/\sqrt2$ and $\ket-=(\ket0-\ket1)/\sqrt2$ are eigenstates of pauli-X matrix.

Comparing Eq.~(\ref{eq:Bpsi0}) with (\ref{eq:Bpsi4}) and  Eq.~(\ref{eq:Bpsi1}) with (\ref{eq:Bpsi5}), we again observe that the coin state is exactly restored after four time steps. From Eq.~(\ref{eq:Bpsi0}) and Eq. (\ref{eq:Bpsi4}), we can see that,
\begin{equation}
|\psi^{-}\rangle_{C_AC_B}
\longrightarrow
|\psi^{-}\rangle_{C_AC_B},
\end{equation}
while the position state is translated by two lattice sites,
\begin{equation}
|22\rangle_{P_AP_B}
\longrightarrow
|00\rangle_{P_AP_B}.
\end{equation}
\subsection{General periodicity}
The above calculations show that the four-step recurrence is independent of whether the initial coin state is separable or entangled. In both cases, the coin degrees of freedom are exactly restored after four time steps, whereas the position degrees of freedom are translated by two lattice sites modulo four.\\
More generally, after every four time steps the action of the evolution operator can be expressed as
\begin{equation}
|i,j,k,l\rangle_{C_AC_BP_AP_B}
\longrightarrow
|i,j,k+2,l+2\rangle_{C_AC_BP_AP_B} 
(\mathrm{mod}\;4).
\label{eq:Btranslation}
\end{equation}
\\
Hence, the coin degrees of freedom are unchanged, whereas each walker is translated by two lattice sites modulo four. Consequently,
\begin{equation}
|0\rangle_{P_r}
\leftrightarrow
|2\rangle_{P_r},
\qquad
|1\rangle_{P_r}
\leftrightarrow
|3\rangle_{p_r},
\qquad r=A,B,
\end{equation}
after every four time steps.\\
Thus, the periodic dynamics on a four-site cycle is characterized by an exact four-step recurrence in the coin sector together with a deterministic translation of the position sector by two lattice sites. This structure underlies the periodic entanglement dynamics discussed in the main text.
\section{Special Dynamics for the Pauli-$X$ Coin}
\label{app:Xcoin}
As discussed in Sec.\ref{subsection5c}, the robustness analysis reveals a pronounced suppression of the final-time GGM near $\epsilon_2=\pi/4$. Since this value corresponds to the Pauli-$X$ coin,

\begin{equation}
O_{C_A}(\epsilon_2=\pi/4)
=
\begin{pmatrix}
0 &
1 \\
1 &
0
\end{pmatrix}=X=O_{C_B}(\epsilon_2=\pi/4)
\label{eq:coin_X},
\end{equation}

we have examined the dynamics in this special case separately.
\begin{figure}[h]
    \centering
    \includegraphics[width=1.0\linewidth]{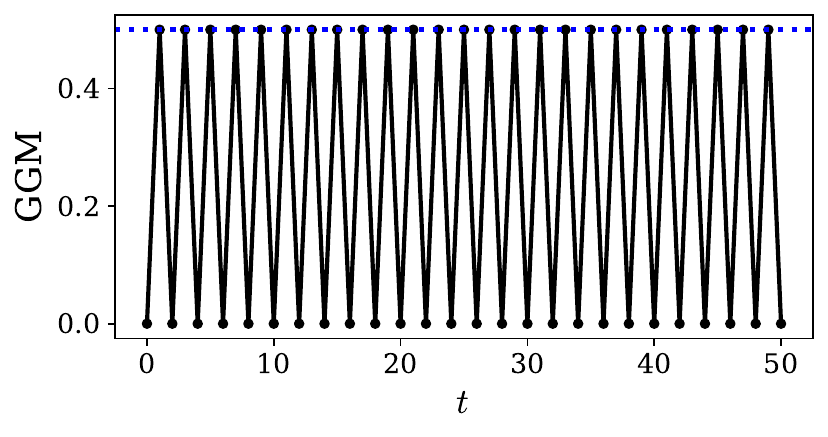}
    \caption{Time evolution of the generalized geometric measure (GGM) for a two-walker DTQW in the open-boundary regime ($N=101$, $T=50$) with the initial Bell-singlet state and $\epsilon_2=\pi/4$, corresponding to the Pauli-$X$ coin.}
    \label{fig:Xcoin_GGM}
\end{figure}
Fig.~\ref{fig:Xcoin_GGM} shows the time evolution of the generalized geometric measure for $\epsilon_2=\pi/4$. Unlike the generic behavior observed for other coin operators, the GGM exhibits a perfectly periodic oscillation between its minimum and maximum allowed values. Specifically,

\begin{equation}
GGM(t)=
\begin{cases}
0, & t \ \text{even},\\
\frac12, & t \ \text{odd},
\end{cases}
\end{equation}

\noindent throughout the evolution.

Thus, the system alternates exactly between states possessing no genuine multipartite entanglement and states attaining the maximum possible GGM allowed by the Hilbert-space structure. The absence of intermediate values indicates that the Pauli-$X$ coin generates a highly constrained evolution in which multipartite entanglement is periodically created and completely destroyed at successive time steps.

This behavior explains the sharp dip observed in the robustness analysis of Fig.\ref{fig:eps2_variation}. Since the final-time GGM depends on the parity of the chosen evolution time, the Pauli$X$ coin constitutes an exceptional point in parameter space where the long-time multipartite-entanglement generation mechanism becomes fundamentally different from that of generic coin operators.
\bibliographystyle{apsrev4-2}
\bibliography{references1}
\end{document}